%
%
%

\documentclass[runningheads]{llncs}

%
%

\usepackage{url}

\usepackage[margin=1in]{geometry}
\usepackage{graphicx}
\usepackage{textgreek}
\usepackage[font=small,labelfont=bf]{caption}
\usepackage{amsmath}
\usepackage{booktabs}
\usepackage{booktabs}
\usepackage{amssymb}
\usepackage{graphicx}
\usepackage{algpseudocode}
\usepackage{algorithm}
\usepackage{enumerate}
\usepackage{subcaption}
\usepackage{gensymb}
\usepackage{enumitem}
\usepackage[most]{tcolorbox}
\usepackage{mathtools}
\usepackage{xcolor}

\long\def\/*#1*/{}

\definecolor{darkpurple}{RGB}{68, 46, 72}
\definecolor{darkred}{RGB}{140,0,0}
\definecolor{darkgreen}{RGB}{0,100,0}
\newtheorem{observation}[definition]{Observation}

\newtheorem{Theorem}[definition]{Theorem}
\newtheorem{Lemma}[definition]{Lemma}

\author{Michael Amir, Noa Agmon and Alfred M. Bruckstein}

\begin{document}
\mainmatter              
\title{A Discrete Model of Collective Marching on Rings}
%
%
\author{Michael Amir\inst{1} \and Noa Agmon\inst{2} \and Alfred M. Bruckstein\inst{1}}

%
\tocauthor{Ivar Ekeland, Roger Temam, Jeffrey Dean, David Grove,
Craig Chambers, Kim B. Bruce, and Elisa Bertino}
\institute{Technion - Israel Institute of Technology,\\
\email{ammicha3@technion.ac.il},
\email{freddy@technion.ac.il}
\and
 Bar-Ilan University, Department of Computer Science\\
\email{agmon@cs.biu.ac.il}}

\maketitle              

\begin{abstract}
We study the collective motion of autonomous mobile agents on a ringlike environment. The agents' dynamics is inspired by known laboratory experiments on the dynamics of locust swarms. In these experiments, locusts placed at arbitrary locations and initial orientations on a ring-shaped arena are observed to eventually all march in the same direction. In this work we ask whether, and how fast, a similar phenomenon occurs in a stochastic swarm of simple agents whose goal is to maintain the same direction of motion for as long as possible. The agents are randomly initiated as marching either clockwise or counterclockwise on a wide ring-shaped region, which we model as $k$ ``narrow'' concentric tracks on a cylinder. Collisions cause agents to change their direction of motion. To avoid this, agents may decide to switch tracks so as to merge with platoons of agents marching in their direction.

We prove that such agents must eventually converge to a local consensus about their direction of motion, meaning that all agents on each narrow track must eventually march in the same direction. We give asymptotic bounds for the expected amount of time it takes for such convergence or ``stabilization'' to occur, which depends on the number of agents, the length of the tracks, and the number of tracks. We show that when agents also have a small probability of ``erratic'', random track-jumping behaviour, a global consensus on the direction of motion across all tracks will eventually be reached. Finally, we verify our theoretical findings in numerical simulations.

\keywords{mobile robotics, swarms, crowd dynamics, natural algorithms}
\end{abstract}

\section{Introduction}

Birds, locusts, human crowds and swarm-robotic systems exhibit interesting collective motion patterns. The underlying autonomous agentic behaviours from which these patterns emerge have attracted a great deal of academic interest over the last several decades  \cite{altshuler2018introduction,Ariel2015locustmarching,Fridman2010crowdbehaviour,ants3}. In particular, the formal analysis of models of swarm dynamics has led to varied and deep mathematical results \cite{ants1,chate2008modelingvicsek,czirok1999collective,ried2019modellinglocusts}. Rigorous mathematical results are necessary for understanding swarms and for designing predictable and provably effective swarm-robotic systems. However, multi-agent swarms have a uniquely complex and ``mesoscopic'' nature \cite{chazelle2018towardmesoscopic}, and relatively few standard techniques for the analysis of such systems have been established. Consequently, the analysis of new models of swarm dynamics is important for advancing our understanding of the subject.

In this work, we study the dynamics of ``locust-like'' agents moving on a  discrete ringlike surface. The model we study is inspired by the following well-documented experiment \cite{Amichay2016locusttopographyarena}: place many locusts on a ringlike arena at random positions and orientations. They start to move around and bump into the arena's walls and into each other, and as they do so, remarkably, over time, they begin to collectively march in the same direction--either clockwise or counterclockwise (see Figure \ref{reallocustsfigure}). Inspired by observing these experiments, we asked the following question: what are simple and reasonable myopic rules of behaviour that might lead to this phenomenon? Our goal is to study this question from an \textit{algorithmic} perspective, by considering a model of discretized mobile agents that act upon a local algorithm. As with much of the literature on swarm dynamics \cite{chazelle2012natural,ants1,probabilisticpursuits}, our goal is not to study an exact mathematical model of locusts in particular (the precise mechanisms underlying locusts' behaviours are very complex and subject to intense ongoing research, e.g. \cite{Amichay2016locusttopographyarena,Ariel2015locustmarching}), but to study the kinds of algorithmic local interactions that lead to collective marching and related phenomena. The resulting model is idealized and simple to describe, but the patterns of motion that emerge while the locusts progress towards a ``stabilized'' state of collective marching are surprisingly complex.

\begin{figure}[!ht]
\centerline{\includegraphics[scale=.2]{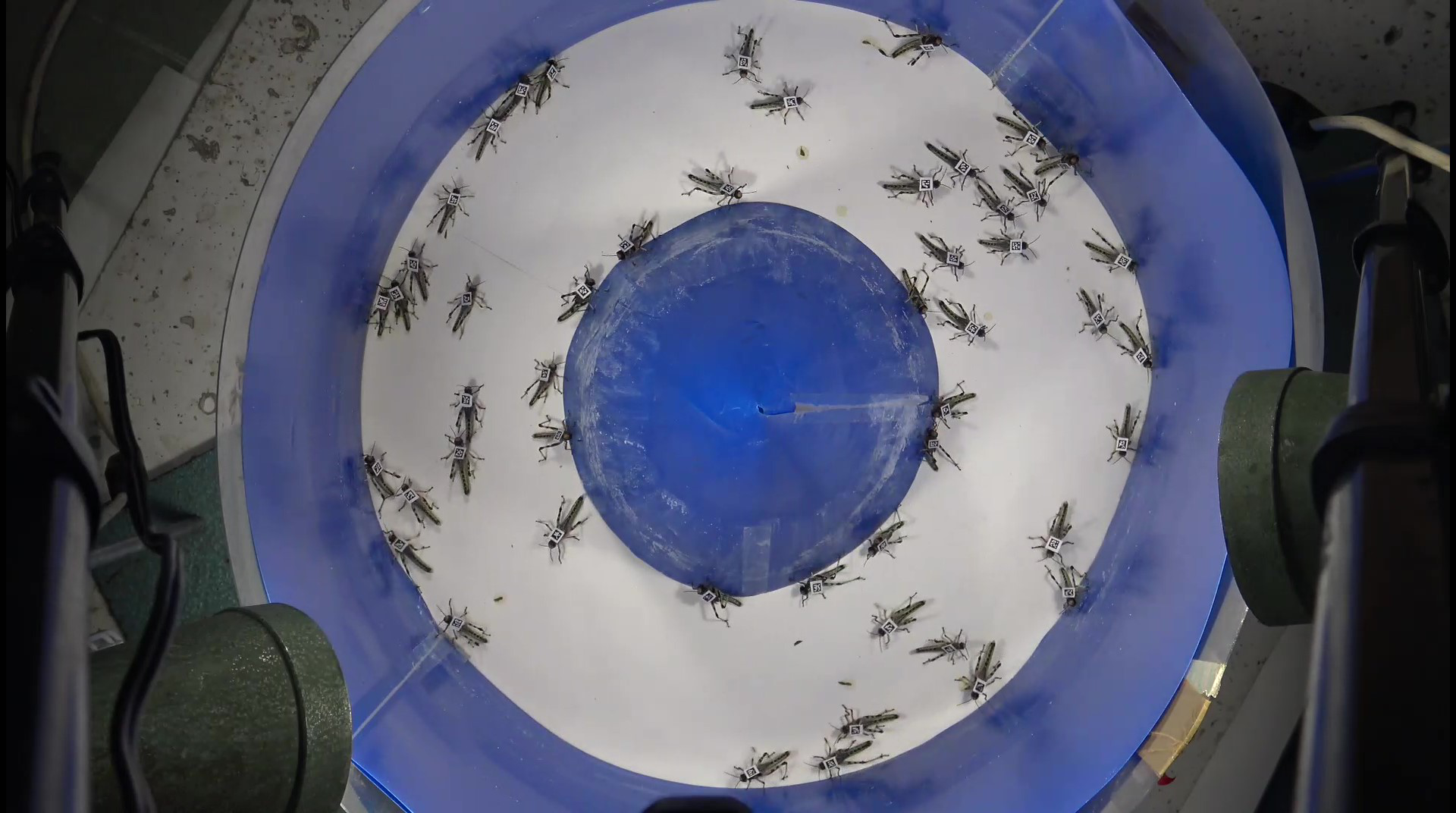}}
\caption{The collective clockwise marching of locusts in a ring arena (image by Amir Ayali).}
\label{reallocustsfigure}
\vspace{-6mm}
\end{figure}

The starting point for this work is the following postulated  ``rationalization'' of what a locust-like agent wants to do: it wants to keep moving in the same direction of motion (clockwise or counterclockwise) for as long as possible. We can therefore consider a model of locust-like agents that never change their heading unless they collide, heads-on, with agents marching in the opposite direction, and are forced to do so due to the pressure which is exerted on them. When possible, these agents prefer to \textit{bypass} agents that are headed towards them, rather than collide with those agents. This is done by changing lanes:  moving in an orthogonal manner between concentric narrow \textit{tracks} which partition the ringlike arena. The formal description of this ``rationalized'' model is given in Section \ref{modelsection}, and will be our subject of study.

\subsubsection{Contribution.} We describe and study a stochastic model of locust-like agents in a discretized ringlike arena which is modelled as multiple \textit{tracks} that wrap around a cylinder. We show that our agents eventually reach a ``local consensus'' about the direction of marching, meaning that all agents on the same track will march in the same direction. We give asymptotic bounds for the amount of time this takes based on the number of agents and the physical dimensions of the arena. Because of the idealized ``precise'' nature of our model, a global consensus where \textit{all} locusts walk in the same direction is not guaranteed, since locusts in different tracks might never meet. However, we show that, when a small probability of ``erratic'', random behaviour is added to the model, such a global consensus must occur. We verify our claims via simulations and make some further empirical observations that may inspire future investigations into the model. 

Despite being simple to describe, analyzing the model proved tricky in several respects. Our analysis strategy is to show that the model repeatedly passes between two phases: one in which it is ``chaotic'', such that locusts are arbitrarily moving about, and one in which it is ``orderly'', such that all locusts are in a kind of dense deadlock situation and collisions are frequent. We derive our asymptotic bounds from studying the well-behaved phase while bounding the amount of time the locusts can spend in the chaotic phase.

\subsubsection{Related work.} The experiments inspiring our work are discussed in \cite{Ariel2015locustmarching,Amichay2016locusttopographyarena}. The mathematical modelling of the collective motion of natural organisms such as birds, locusts and ants, and the convergence of such systems of agents to stable formations, has been discussed in numerous works including \cite{czirok1999collective,ants3,ried2019modellinglocusts,shiraishi2015collectivelyapunovswarmdynamics}. 

The central focus of this work regards consensus: do the agents eventually converge to the same direction of motion, and how long does it take? Similar questions are often asked in the field of opinion dynamics. Mathematically, if the agents' direction of motion (clockwise or counterclockwise) is considered an ``opinion'', we can compare our model to models in this field. When there are no empty locations at all in the environment, our model is fairly close to the \textit{voter model} on a ring network with two distinct opinions, the main difference being that, unlike in the voter model, our agents' direction of motion determines which agents' opinions can influence them (an excellent survey on this topic is \cite{dong2018surveyvoteretc}). The comparison to the voter model breaks when we introduce empty locations and multiple ringlike tracks, at which point we must take into account the physical location of every agent when considering which agents can influence its opinion. Several works have explored models of opinion dynamics in a ring environment where the agents' physical location is taken into account \cite{chandra2017diffusionring,hegarty2016hegselmann}. Our model is distinct from these in several respects: first, in our model, an agents' internal state--its direction of motion--plays an active part in the algorithm that determines which locations an agent may move to. Second, we partition our ring topology into several narrow rings (``tracks'') that agents may switch between, and an agents' decision to switch tracks is influenced by the presence of platoons of agents moving in its direction in the track that it wants to switch to. In other words, we model agents that actively attempt to ``swarm'' together with agents moving in their direction of motion

Protocols for achieving consensus about a value, location or the collective direction of motion have also been investigated in swarm robotics and distributed algorithms  \cite{barel2017come,cortes2008distributedconsensus,manor2018chase,olfati2007consensus}. However, in this work, we are not searching for a protocol that is designed to efficiently bring about consensus; we are investigating a protocol that is inspired by natural phenomena and want to see \textit{whether} it leads to consensus and how long this takes on average.

Broadly speaking, some mathematical similarities may be drawn between our model and interacting particle systems such as the simple exclusion process, which have been used to understand biological transport and traffic phenomena \cite{chou2011biologicaltasep}. Such particle systems have been studied on rings \cite{kriecherbauer2010pedestrian}. In these discrete models, as in our model, agents possess a physical dimension, which constrains the locations they might move to in their environment. These are not typically multi-agent models where agents have an internal state (such as a persistent direction of motion), but rather models of particle motion and diffusion, and the research focus is quite different; the main point of similarity to our model is in the way that a given discrete location can only be occupied by a single agent, and in the random occurrence of ``traffic shocks'' wherein agents line up one after the other and are prevented from moving for a long time. 

\section{Model and definitions}
\label{modelsection}

We postulate a locust-inspired model of marching in a wide ringlike arena which is divided into narrow concentric rings. For simplicity, we map the arena to the surface of a discretized cylinder of height $k$ partitioned into $k$ narrow rings of length $n$, which are called \textit{tracks}. For example, the environment of Figure \ref{locuststepsfigure} corresponds to $k=3, n=8$ ($3$ tracks of length $8$). The coordinate $(x,y)$ refers to the $x$th location on the $y$th track (which can also be seen as the $x$th location of a ring of length $n$ wrapped around the cylinder at height $y$). Since we are on a cylinder, we have that $\forall x, (x+n, y) \equiv (x,y)$. 

A swarm of $m$ identical agents, or ``locusts,'' which we label $A_1, \ldots, A_m$, are dispersed at arbitrary locations on the cylinder and move autonomously at discrete time steps $t = 0, 1, \ldots$. A given location $(x,y)$ on the cylinder can contain at most one locust.  Each locust $A_i$ is initiated with either a ``clockwise'' or ``counterclockwise'' \textit{heading}, which determines their present direction of motion. We define $b(A_i) = 1$ when $A_i$ has clockwise heading, and $b(A_i) = -1$ when $A_i$ has counterclockwise heading. 

The locusts move synchronously at discrete time steps $t = 0, 1, \ldots$. At every time step, locusts try to take a step in their direction of motion: if a locust $A$ is at $(x,y)$, it will attempt to move to $(x+b(A), y)$. A clockwise movement corresponds to adding $1$ to $x$, and a counterclockwise movement corresponds to subtracting $1$. The locusts have physical dimension, so if the location a locust attempts to move to already contains another locust at the beginning of the time step, the locust instead stays put. If $A_i$ and $A_j$ are both attempting to move to the same location, one of them is chosen uniformly at random to move to the location and the other stays put.

Locusts that are adjacent exert pressure on each other to change their heading: if $A_i$ has a clockwise heading and $A_j$ has a counterclockwise heading, and they lie on the coordinates $(x,y)$ and $(x+1,y)$ respectively, then at the end of the current time step, one locust (chosen uniformly at random) will flip its heading to the other locust's heading. Such an event is called a \textbf{conflict} between $A_i$ and $A_j$. A conflict is ``won'' by the locust that successfully converts the other locust to their heading.

 Let $A$ be a locust at $(x,y)$. If the locust $A$ has clockwise heading, then the \textit{front} of $A$ is the first locust after $A$ in the clockwise direction, and the \textit{back} of $A$ is the first locust in the counterclockwise direction. The reverse is true when $A$ has counterclockwise heading. Formally, let $i > 0$ be the smallest positive integer such that  $(x+b(A) i, y)$ contains a locust, and let $j > 0$ be the smallest positive integer such that  $(x-b(A)j, y)$ contains a locust. The \textit{front} of $A$ is the locust in $(x+b(A) i, y)$ and the \textit{back} of $A$ is the locust in $(x-b(A) j, y)$. The locusts in the front and back of $A$ are denoted $A^{\rightarrow}$ and $A^{\leftarrow}$ respectively, and are called $A$'s \textit{neighbours}; these are the locusts that are directly in front of and behind $A$. Note that when a track has two or less locusts,  $A^{\rightarrow} = A^{\leftarrow}$. When a track has one locust, $i = j = n$ and so $A = A^{\rightarrow} = A^{\leftarrow}$.
 
 At any given time step, besides moving in the direction of their heading within their track, a locust $A$ at $(x,y)$ can switch tracks, moving vertically from $(x,y)$ to $(x,y+1)$ or $(x,y-1)$ (unless this would cause it to go above track $k$ or below track $1$). Such vertical movements occur \textit{after} the horizontal movements of locusts along the tracks, but on the same time step where those horizontal movements took place. Locusts are incentivized to move vertically when this enables them to avoid changing their heading (``inertia''). Specifically, $A$ may move to the location $E = (x,y \pm1)$ at time $t$ when:
 
 \begin{enumerate}
     \item At the beginning of time $t$, $A$ and $A^{\rightarrow}$ are not adjacent to each other and $b(A) \neq b(A^{\rightarrow})$.
     \item Once $A$ moves to $E$, the updated $A^{\leftarrow}$ and $A^{\rightarrow}$ in the new track will have heading $b(A)$.
     \item No locust will attempt to move horizontally to $E$ at time $t+1$.
 \end{enumerate}

Condition (1) states that there is an imminent conflict between $A$ and $A^{\rightarrow}$ which is bound to occur. Condition (2) guarantees that, by changing tracks to avoid this conflict, $A$ is not immediately advancing towards another collision; $A$'s new neighbours will have the same heading as $A$. Condition (3) guarantees that the location $A$ wants to move to on the new track isn't being contested by another locust already on that track. Together, these conditions mean that locusts only change tracks if this results in avoiding collisions and in ``swarming'' together with other locusts marching in the same direction of motion. If a locust cannot sense that all three conditions (1), (2) and (3) are fulfilled, it does not switch tracks.

Besides these conditions, we make no assumptions about  \textit{when} locusts move vertically. In other words, locusts do not always need to change tracks when they are allowed to by rules (1)-(3); they may do so arbitrarily, say with some probability $q$ or according to any internal scheduler or algorithm. We do not determine in any sense the times when locusts move tracks--but only determine the preconditions required for such movements; our results in the following sections remain true regardless. This makes our results general in the sense that they hold for many different track-switching ``swarming'' rules, so long as those rules do not break the conditions (1)-(3). 

Figure \ref{locuststepsfigure} illustrates one time step of the model, split into horizontal and vertical movement phases.

\begin{figure}[!ht]
\centerline{\includegraphics[scale=.3]{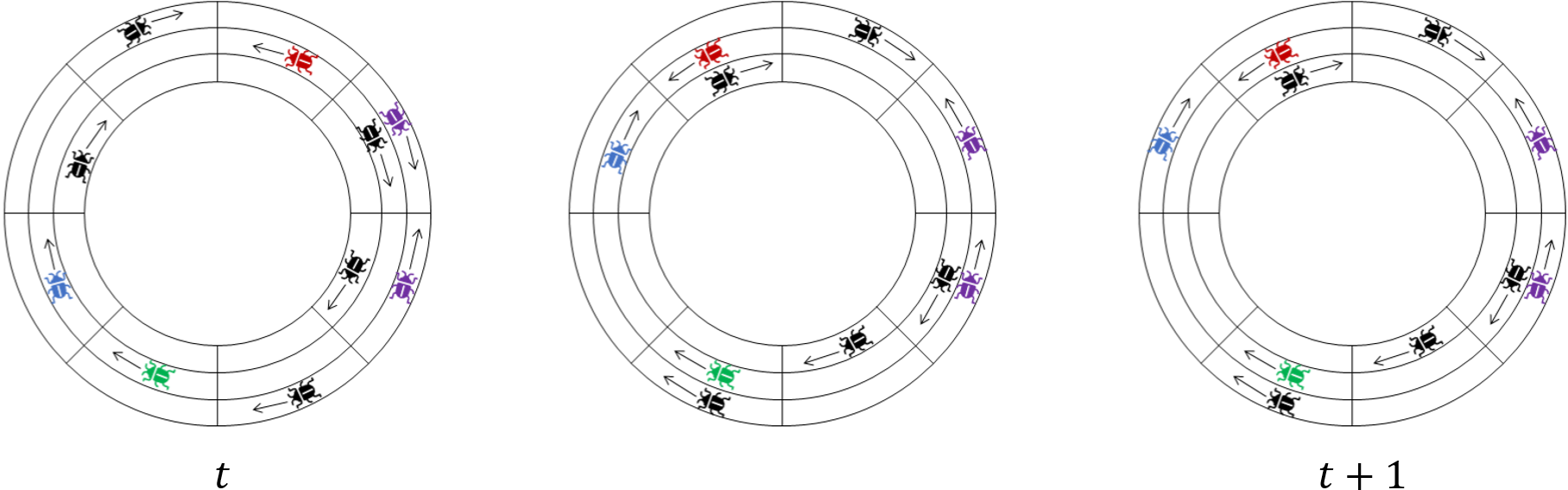}}
\caption{One step of the locust model with $k=3$, $n=8$ split into horizontal and vertical movements. The leftmost figure is the initial configuration at the beginning of the current time step $t$, the middle illustrates changes to the configuration after conflicts and horizontal movements, and the rightmost figure is the configuration at the beginning of time $t+1$ (or equivalently the end of time $t$) after vertical movements. The \textit{front} and \textit{back} of the \color{blue}{blue} \color{black} locust are the \color{darkred}{red}\color{black}\  and \color{darkgreen}{green}\color{black}\  locusts respectively. The \color{darkpurple}{purple}\color{black}\  locusts conflict with each other. Since conditions (1)-(3) are fulfilled, the \color{blue}{blue}\color{black}\  locust may switch tracks, and it does so.}
\label{locuststepsfigure}
\vspace{-6mm}
\end{figure}

In order to slightly simplify our analysis of the model, we assume that every track has at least $2$ locusts at all times, although our results remain true without this assumption.

Everywhere in this work, \textit{the beginning} of a time step refers to the configuration of the swarm at that time step before any locusts moved, and \textit{the end} of a time step refers to the configuration at that time step after all locust movements are complete. By default and unless stated otherwise, the words ``time step $t$'' refer to the beginning of that time step.

\section{Stabilization analysis}

We will mainly be interested studying the stability of the headings of the locusts over time. Does the model reach a point where the locusts stabilize stop changing their heading? If so, are their headings all identical? How long does it take?

In the case of a single track ($k = 1$), we shall see that the locusts all eventually stabilize with identical heading, and bound the expected time for this to happen in terms of $m$ and $n$. In the multi-track case, we shall see that the locusts stabilize and agree on a heading \textit{locally} (i.e., all locusts \textit{on the same track} eventually have identical heading and thereafter never change their heading), and bound the expected time to stabilization in terms of $m, n, k$. In the multi-track case, we show further that adding a small probability of ``erratic'' track-switching behaviour to the model induces \textit{global} consensus: all locusts across all tracks eventually have identical heading.

\subsection{Locusts on narrow ringlike arenas ($k=1$)}

We start by studying the case $k=1$, that is, we study a swarm of $m$ locusts marching on a single track of length $n$. Throughout this section, we assume this is the case, except in Definition \ref{segmentdefinition}, which is also used in later sections. 

For the rest of this section, let us call the swarm \textit{non-stable} at time $t$ if there are two locusts $A_i$ and $A_j$ such that $b(A_i) \neq b(A_j)$; otherwise, the swarm is \textit{stable}. A swarm which is stable at time $t$ remains stable thereafter. We wish to bound the number of time steps it takes for the system to become stable, which we denote $T_{stable}$. Our goal is to prove Theorem \ref{k1bounds}, which tells us that the expected time to stabilization grows quadratically in the number of locusts $m$, and linearly in the track length $n$.

\begin{Theorem}
\label{k1bounds}
For any configuration of $m$ locusts on a ring with a single track, $\mathbb{E}[T_{stable}] \leq  m^2 + 2(n - m)$. This bound is asymptotically tight: there are initial locust configurations for which $\mathbb{E}[T_{stable}] = \Omega(m^2 + n - m)$. 
\end{Theorem}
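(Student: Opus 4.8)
The plan is to bound $T_{stable}$ by splitting each (non-stable) time step into two kinds: a \emph{conflict step}, in which at least one conflict occurs, and an \emph{idle step}, in which none does. I will show the expected number of conflict steps is $O(m^2)$ and the expected number of idle steps is $O(n-m)$; their sum gives the bound, with the two terms corresponding exactly to the ``orderly''/deadlock and ``chaotic'' phases described in the introduction. Two facts are used throughout. First, on a single track locusts cannot pass one another, so their cyclic order is invariant; label them $A_1,\dots,A_m$ in this order. Second, let $N_+(t)$ and $N_-(t)=m-N_+(t)$ count the clockwise and counterclockwise locusts; the swarm is stable precisely when $N_+\in\{0,m\}$.

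For the conflict steps I would use a martingale. The only event that changes a heading is a conflict, and its winner is chosen uniformly at random, so each conflict moves $N_+$ by $+1$ or $-1$ with probability $1/2$ each. Since a locust is either $+$ or $-$, it lies in at most one conflict per step, so the simultaneous conflicts at a step act on disjoint pairs with independent coins. Hence $N_+(t)$ is a bounded martingale with $\mathbb{E}[(N_+(t{+}1)-N_+(t))^2 \mid \mathcal{F}_t] = \mathbb{E}[C_t \mid \mathcal{F}_t]$, where $C_t$ is the number of conflicts at step $t$. Therefore $N_+(m-N_+)$ is a supermartingale whose expected one-step decrease is exactly $\mathbb{E}[C_t\mid\mathcal{F}_t]$; telescoping over a finite horizon and letting it grow yields $\mathbb{E}[\sum_t C_t] = N_+(0)N_-(0) \le m^2/4$. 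Since every conflict step contributes at least one conflict, the expected number of conflict steps is at most $m^2/4 \le m^2$.

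For the idle steps I would track a potential $\Phi(t)$ equal to the number of \emph{converging} empty cells, i.e. empty cells whose nearest clockwise locust has heading $-1$ and whose nearest counterclockwise locust has heading $+1$ (so both bordering locusts march toward the cell); note $\Phi \le n-m$. On an idle non-stable step every $+-$ heading boundary is non-adjacent, so at each converging interface both fronts advance into the gap, and since no heading flips, no converging cell is created; thus $\Phi$ strictly decreases. Consequently the number of idle steps is at most $\Phi(0)$ plus the total increase of $\Phi$ over conflict steps, and $\Phi(0)\le n-m$. \emph{The main obstacle is controlling this increase}: a conflict can flip a front and expose empty cells behind it, turning them converging at once. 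I expect to resolve this through the orderly/chaotic dichotomy. Once two opposing fronts first meet they re-pack and conflict essentially every step (a trailing locust advances into the spacing behind the held front each step), so the interface behaves like the fully packed case, exhausting its boundary random walk within the $O(m^2)$ conflict budget while pushing empty cells \emph{outward} toward the diverging regions. Empty cells therefore drift monotonically out of converging regions, and a charging argument limits the total increase of $\Phi$ to $O(n-m)$, giving an idle bound of $O(n-m)$; the careful bookkeeping is what delivers the sharp constant $2(n-m)$. Combining the two phases yields $\mathbb{E}[T_{stable}] = O(m^2 + (n-m))$, matching the claimed $m^2 + 2(n-m)$.

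For asymptotic tightness I would give explicit families. Placing a block of $m/2$ clockwise locusts immediately facing a block of $m/2$ counterclockwise locusts (packed, fronts adjacent) makes the whole configuration a frozen block in which a conflict occurs every step and the interface performs a symmetric random walk on $\{0,\dots,m\}$ absorbed at the endpoints from $m/2$; this needs $\Theta(m^2)$ conflicts and hence $\Omega(m^2)$ steps. Placing a single clockwise and a single counterclockwise locust almost antipodally forces $\Omega(n)=\Omega(n-m)$ idle steps before the fronts can meet. Finally, separating the two $m/2$-blocks by a converging gap of size $\Theta(n-m)$ combines both effects in one configuration: roughly $(n-m)/2$ idle steps to close the gap, followed by $\Theta(m^2)$ conflict steps to resolve the interface, giving $\mathbb{E}[T_{stable}] = \Omega(m^2 + n - m)$.
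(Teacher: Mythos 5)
Your bound on the conflict steps is correct and is a genuinely different argument from the paper's. The paper anchors its count to a single persistent ``winning segment'' $SW(t)$ and couples $|SW(t)|$ to a reflected random walk, obtaining $m^2$; you instead observe that $N_+(t)$ is a martingale whose conditional quadratic variation per step equals the conditional expected number of conflicts, so that $N_+(m-N_+)$ is a nonnegative supermartingale whose total expected decrease bounds $\mathbb{E}\bigl[\sum_t C_t\bigr]$ by $N_+(0)N_-(0)\le m^2/4$. This is clean and rigorous (each locust sits in at most one conflict per step, the pairs are disjoint, and the coins are fair and independent), it improves the constant, and it avoids needing the winning-segment construction for this half entirely. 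Your tightness constructions are essentially the paper's.

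The idle-step bound, however, has a genuine gap, and it sits exactly where the theorem is hard. Your potential $\Phi$ (the number of converging empty cells) does satisfy the easy properties you claim: $\Phi \le n-m$, horizontal motion never creates a converging cell, and on an idle non-stable step $\Phi$ strictly decreases. But the number of idle steps is bounded by $\Phi(0)$ \emph{plus the total increase of $\Phi$ at conflicts}, and that increase is the crux, not bookkeeping. When a segment head loses a conflict, every empty cell between it and the next locust of its former segment becomes converging at once; if that segment is sparse (say, two locusts separated by a long run of empty cells), a \emph{single} conflict raises $\Phi$ by $\Theta(n-m)$. Worse, your guiding claim that empty cells ``drift monotonically out of converging regions'' is false: a cell absorbed at an interface travels backward through a segment, parks at a diverging gap, and can become converging \emph{again} once the segments bordering that gap are themselves flipped or eliminated later in the process. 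Every interface in your argument is allowed to flip, so nothing pins down a drift direction, and since each unit of $\Phi$-increase must be repaid by a further idle step, the accounting is circular.

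What breaks the circularity in the paper is an anchor that provably cannot flip: a segment tail $A_W$ that remains a tail, with unchanged heading, for all $t<T_{stable}$. All empty cells are then tracked \emph{relative to this fixed locust} via the blocked/unblocked machinery (Lemmas \ref{unblockedoroutsidelemma} and \ref{noswapwithunblockedlemma}), which shows that once a cell is unblocked it marches to $A_W$ and never again meets the winning head, so the winning head wastes at most $2(n-m)$ expected steps out of conflict (Lemma \ref{nochangecount}). To complete your proof you need an analogous flip-proof anchor, or a charging scheme that survives re-exposure of cells; as written, your ``orderly/chaotic dichotomy'' paragraph is a conjecture, not an argument.
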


In particular, Theorem \ref{k1bounds} tells us that all locusts must have identical bias within finite expected time. This fact in isolation (without the time bounds in the statement of the theorem) is relatively straightforward to prove, by noting that the evolution of the locusts' headings and locations can be modelled as a finite Markov chain, and the only absorbing classes in this Markov chain are ones in which all locusts have the same heading (see  \cite{grinstead2012introduction}).

Next we define \textit{segments}: sets of consecutive locusts on the same track which all have the same heading.  This will allow us to partition the swarm into segments, such that every locust belongs to a unique segment (see Figure \ref{segmentfigure}). Although this section focuses on the case of a single track (and claims in this section are made under the assumption that there is only a single track), the definition is general, and we will use it in subsequent sections.

\begin{definition}
\label{segmentdefinition}

Let $A$ be a locust for which $b(A^{\leftarrow}) \neq b(A)$ at time $t$, and consider the sequence of locusts $B_0 = A$, $B_{i+1} = B_i^{\rightarrow}$. Let $B_q$ be the first locust in this sequence for which $b(B_q) \neq b(B_0)$. The set $\{B_0, B_1, \ldots B_{q-1}\}$ is called the \textbf{segment} of the locusts $B_0, \ldots B_{q-1}$  at time $t$. The locust $ B_{q-1}$ is called the \textbf{segment head}, and $A$ is called the \textbf{segment tail} of this segment.

\end{definition}

\begin{figure}[!ht]
\centerline{\includegraphics[scale=.3]{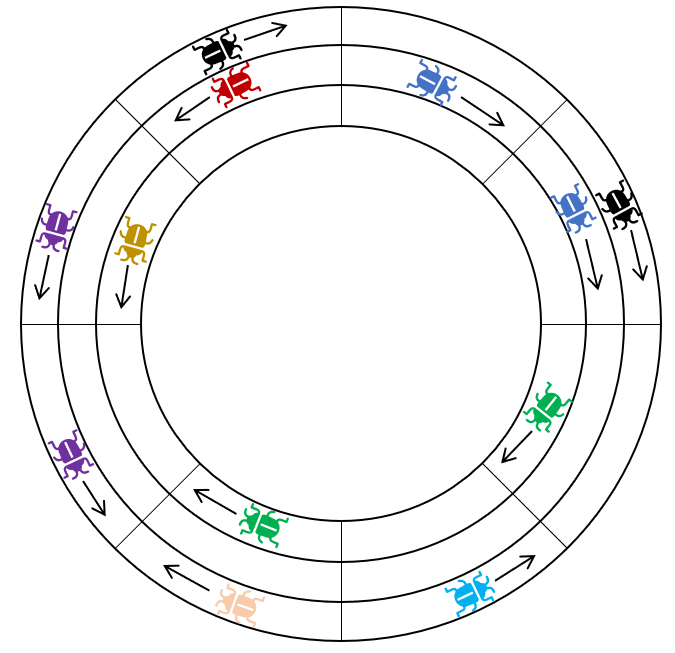}}
\caption{A locust configuration with $n=8, k=3$. Locusts are colored based on their segment.}
\label{segmentfigure}
\vspace{-6mm}
\end{figure}

Only locusts which are segment heads at the beginning of a time step can change their heading by the end of that time step. When the heads of two segments are adjacent to each other, the resulting conflict causes one to change its heading, leave its previous segment, and instead become part of the other segment. If the head of a segment is also the tail of a segment, the segment is eliminated when it changes heading. Two segments separated by a segment of opposite heading merge if the opposite-heading segment is eliminated, which decreases the number of segments by $2$. No other action by a locust can change the segments. Hence, the number of segments and segment tails can only decrease. 

Since our model is stochastic, different sequences of events may occur and result in different segments. However, by the above argument we can conclude that in any such sequence of events,  there must always exist at least one locust which remains a segment tail at all times $t < T_{stable}$ and never changes its heading (since at least one segment must exist as long as $t < T_{stable}$). Arbitrarily denote one such segment tail ``$A_W$''.  

\begin{definition}
\label{winningsegment}
The segment of $A_W$ at the beginning of time $t$ is called the \textbf{winning segment} at time $t$, and is denoted $SW(t)$. The head of $SW(t)$ is labelled $H_W(t)$. For convenience, if at time $t_0$ the swarm is stable (i.e. $t_0 \geq T_{stable}$), then we define $SW(t_0)$ as the set that contains all $m$ locusts.
\end{definition}

\begin{Lemma}
\label{changecountlemma}
The expected number of time steps $t < T_{stable}$ in which $|SW(t)|$ changes is bounded by $m^2$.
\end{Lemma}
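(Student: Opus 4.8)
The plan is to track the winning segment $SW(t)$ and count, across all time steps $t < T_{stable}$, the number of times its size $|SW(t)|$ changes. The key observation is that $|SW(t)|$ can only change as a result of a conflict involving the head $H_W(t)$: the tail $A_W$ is a segment tail at all times by construction (Definition \ref{winningsegment}) and never changes heading, so the only locust in $SW(t)$ that can gain or lose membership through a conflict is the head $H_W(t)$. Each such conflict either increments $|SW(t)|$ by $1$ (the head wins and converts the opposing locust, absorbing it into $SW(t)$, possibly triggering a merge), or decrements it (the head loses, leaves the segment and joins the opposing one). So bounding the expected number of size-changes is equivalent to bounding the expected number of conflicts won or lost by the sequence of segment heads.

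The heart of the argument is to recognize this as a one-dimensional random walk. I would let $X_t = |SW(t)|$ and observe that $X_t$ takes integer values in $\{1, \ldots, m\}$: it is absorbed (in the relevant sense) when it reaches $m$, since at that point all $m$ locusts share the heading of $A_W$ and the swarm is stable. The crucial structural fact to extract from the conflict rule is that, whenever a conflict at the head does occur, the head wins or loses with probability exactly $1/2$ each (by the uniform random choice in the definition of a conflict). This makes the subsequence of values of $X_t$ at the moments it changes into a symmetric simple random walk on $\{1, \ldots, m\}$. Starting from some initial size $X_0 \geq 1$, with a reflecting-type lower behaviour at $1$ and absorption at $m$, the expected number of steps this walk takes to reach $m$ is $O(m^2)$; I would invoke the standard hitting-time bound for symmetric random walk on a path of length $m$, which gives the claimed $m^2$ bound on the expected number of changes to $|SW(t)|$.

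I expect the main obstacle to be justifying the symmetry and the well-definedness of the induced walk rigorously, rather than the random-walk calculation itself. Several subtleties must be handled carefully. First, one must confirm that $|SW(t)|$ truly cannot change except through a conflict at $H_W(t)$ — in particular, track-switching ($k > 1$) and the merging of two same-heading segments when an intervening segment is eliminated must be shown not to alter $|SW(t)|$ in a way that breaks symmetry. In the $k=1$ setting of this subsection, vertical moves are absent, so the relevant case is the merge: when $H_W(t)$ wins a conflict and the defeated locust was itself a segment tail, two segments fuse and $|SW(t)|$ can jump by more than $1$. I would need to argue either that such jumps still respect an $O(m^2)$ hitting-time bound (monotone coupling with the symmetric walk, since a merge can only help absorption by increasing the size faster) or absorb them into the analysis by noting they only accelerate reaching $m$.

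The second subtlety is the step at the boundary: when $|SW(t)| = 1$, the head and tail coincide, and a lost conflict would seemingly eliminate the winning segment, contradicting the choice of $A_W$ as a permanent tail. The resolution is precisely that $A_W$ never loses a conflict as a tail, so the walk does not truly absorb or vanish at $1$ but is instead reflected or held, which is consistent with and in fact better than the symmetric walk for the purpose of an upper bound. Formalizing this — that the presence of the always-surviving tail $A_W$ only shortens the expected time to reach size $m$ compared to a free symmetric walk on $\{1,\ldots,m\}$ — is where I would focus the rigor, likely via a coupling or a direct optional-stopping argument on the martingale $X_t$ together with the quadratic potential $X_t(m - X_t)$ or $X_t^2$. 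Once symmetry and the upper-bounding coupling are established, the conclusion $\mathbb{E}[\#\{t < T_{stable} : |SW(t)| \text{ changes}\}] \leq m^2$ follows from the standard bound.
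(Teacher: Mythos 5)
Your proposal is correct and takes essentially the same route as the paper: both arguments reduce the count of size-changes of $SW(t)$ to a symmetric random walk that is walled/reflected at the bottom (since $A_W$ never loses its tail status, so the segment never vanishes) and absorbed at $m$, use a coupling/domination argument to dispose of merges and jumps larger than one (which only accelerate absorption), and invoke the standard quadratic hitting-time bound to get $m^2$. The paper phrases the lower boundary as conditioning an absorbing walk on never reaching $0$ rather than as reflection, but this is the same idea.
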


\begin{proof}
Let $C_m$ denote the number of changes to the size of $SW(t)$ that occur before time $T_{stable}$. Note that $T_{stable}$ is the first time step where $|SW(t)| = m$. $|SW(t)|$ can only decrease, by $1$ locust at a time, if $H_W(t)$ conflicts with another locust and loses. $|SW(t)|$ can increase in several ways, for example when it merges with other segments. In particular, $|SW(t)|$ increases by at least $1$ whenever $H_W(t)$ conflicts with a locust and wins, which happens with probability at least $\frac{1}{2}$. Hence, whenever $SW(t)$ changes in size, it is more likely to grow than to shrink. We can bound $E[C_m]$ by comparing the growth of $|SW(t)|$ to a random walk with absorbing boundaries at $0$ and $m$: 

Consider a random walk on the integers which starts at $|SW(0)|$. At any time step $t$, the walker takes a step left with probability $\frac{1}{2}$, otherwise it takes a step right. If the walker reaches either $0$ or $m$,  the walk ends. Denote by $C^*_m$ the time it takes the walk to end. Using \textit{coupling} (cf. \cite{lindvall2002coupling}), we see that $\mathbb{E}[C_m] \leq \mathbb{E}[C^*_m | \text{the walker never reaches 0}]$, since per the previous paragraph, $|SW(t)|$ clearly grows at least as fast as the position of the random walker (note that $|SW(t)| > 0$ is always true, which is analogous to the walker never reaching $0$). 

Let us show how to bound $\mathbb{E}[C^*_m | \text{the walker never reaches 0}]$. Since the walk is memoryless, we can think of this quantity as the number of steps the random walker takes to get to $m$, assuming it must move right when it is at $0$, and assuming the step count restarts whenever it moves from $0$ to $1$. If we count the steps without resetting the count, we get that this is simply the expected number of steps it takes a random walker walled at $0$ to reach position $m$, which is at most $m^2$ (cf. \cite{aldous1995reversible}). Hence $\mathbb{E}[C^*_m | \text{the walker never reaches 0}] \leq m^2$. \qed\end{proof}

\begin{Lemma}
\label{nochangecount}
The expected number of time steps $t < T_{stable}$ in which $|SW(t)|$ does not change is bounded by $2(n - m)$.
\end{Lemma}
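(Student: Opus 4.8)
The plan is to characterize no-change steps dynamically and then charge each of them to an empty cell, so that their total is controlled by the number of empty cells, which is exactly $n-m$.

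First I would establish the basic dynamical fact driving the lemma: a step is a no-change step for $SW(t)$ precisely when the head $H_W(t)$ is not adjacent to the opposing locust in front of it, $F := H_W(t)^{\rightarrow}$, and no merge occurs at the tail $A_W$. Assume without loss of generality that the winning segment has clockwise heading, so $F$ has counterclockwise heading and the two face each other across a positive gap of empty cells. Since every cell strictly between $H_W$ and $F$ is empty, both $H_W$ (moving clockwise) and $F$ (moving counterclockwise) are unobstructed and advance toward each other; hence on any no-change step the gap strictly decreases, and in particular $H_W$ advances by exactly one cell into previously empty space. This identifies each no-change step with a single forward move of the head across one empty cell.

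Next I would set up the charging argument. There are exactly $n-m$ empty cells, and I would charge each no-change step to the empty cell that $H_W$ moves into on that step. The statement then reduces to a multiplicity bound: I claim that, over the entire run up to $T_{stable}$, the head sweeps across each fixed empty cell at most twice, which immediately yields $2(n-m)$. A potential-function bookkeeping makes the ``at most twice'' plausible: letting $\Phi(t)$ count the empty cells lying in the arc spanned by $SW(t)$ together with its front gap (from $A_W$ clockwise to $F$), one checks that $\Phi$ drops by at least one on every no-change step, is unchanged when $H_W$ loses a conflict (an internal gap is merely re-labelled as the new front gap), and can only increase when $H_W$ wins a conflict and absorbs $F$, thereby drawing the gap that lay beyond $F$ into the spanned arc. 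Summing these contributions reduces the whole lemma to bounding the total gap drawn in across all won conflicts.

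The main obstacle is exactly this last bound, and it is where the factor of $2$ must be earned. The difficulty is that the head need not advance monotonically around the ring: it jumps backward whenever it loses a conflict, re-exposing internal empty cells that it must then traverse again, and over the possibly many revolutions it makes before stabilization it could in principle re-encounter the same empty cells repeatedly. To rule this out I would exploit the two structural features that distinguish $SW(t)$: its tail $A_W$ never flips its heading, so empty cells shed behind the advancing segment are released across a fixed, one-directional (diverging) boundary and cannot freely recirculate to the front; and its size grows with a bias, exactly as quantified in Lemma~\ref{changecountlemma}, so that the head's backward jumps are controlled in expectation. Combining the monotone clockwise sweep of the segment with the fixed tail to show that each empty cell can be returned to the region in front of the head at most once --- and hence be charged at most twice --- is the crux; once it is in place, the charging argument closes and gives $\mathbb{E}[\,\#\{t < T_{stable} : |SW(t)| \text{ unchanged}\}\,] \le 2(n-m)$.
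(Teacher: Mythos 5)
There is a genuine gap, and you have in fact pointed at it yourself: the claim that ``each empty cell can be returned to the region in front of the head at most once'' is deferred as ``the crux'' and never proved, but this is precisely the hard content of the lemma. The paper's entire proof apparatus exists to settle this point: it assigns persistent identities $E_1,\ldots,E_{n-m}$ to the empty cells, sorted by counterclockwise distance to the tail, and introduces the blocked/unblocked machinery of Lemmas~\ref{unblockedoroutsidelemma} and~\ref{noswapwithunblockedlemma}. Those lemmas show something stronger and slightly different from your claim: after a transient of at most $n-m$ steps (during which one cell becomes unblocked per step), every cell the head swaps with becomes unblocked, then percolates counterclockwise through the segment and lodges \emph{permanently} behind the immovable tail $A_W$ (nothing can ever refill the cell behind $A_W$, since $A_W$ never flips and the locust behind it marches away). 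So the correct multiplicity is ``at most once after the transient,'' not ``at most twice,'' and proving it requires exactly the kind of cell-by-cell tracking your sketch replaces with an unproven assertion. Your intuition about the fixed tail is the right one, but as written the charging argument does not close.

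Separately, your dynamical premise is false, and this breaks the charging scheme even granting the crux. On a no-change step where the gap between $H_W(t)$ and its front $F$ is exactly one, both locusts contest the same empty cell, and $H_W(t)$ stays put with probability $1/2$; such a step enters no new cell and so cannot be charged under your rule. (For the same reason, your claim that $\Phi$ strictly drops on every no-change step fails when $H_W$ wins that tiebreak and $F$ does not move.) This tiebreak is where the paper's factor of $2$ actually comes from: the number of genuine swaps of $H_W(t)$ with empty cells is bounded by $n-m$ \emph{deterministically}, and each no-change step produces a swap with probability at least $1/2$, giving $\mathbb{E}[\text{no-change steps}] \leq 2(n-m)$. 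In other words, the factor of $2$ is probabilistic, not combinatorial. Under your accounting, the lost-tiebreak steps would have to be charged to the conflicts they immediately precede, yielding at best a bound of the form $2(n-m)$ plus the expected number of conflicts, i.e.\ $2(n-m)+m^2$ by Lemma~\ref{changecountlemma} --- still good enough for the asymptotics of Theorem~\ref{k1bounds}, but strictly weaker than the statement of Lemma~\ref{nochangecount} that you set out to prove.
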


Lemma \ref{nochangecount} will require other lemmas, and some new definitions to prove.

\begin{definition}
\label{clockdistance}
Let $A$ and $B$ be two locusts or two locations which lie on the same track. The clockwise distance from $A$ to $B$ at time $t$ is the number of clockwise steps required to get from $A$'s location to $B$'s location, and is denoted $dist^{c}(A,B)$. The counterclockwise distance from $A$ to $B$ is denoted $dist^{cc}(A,B)$ and equals $dist^{c}(B,A)$.
\end{definition}

\textbf{For the rest of this section,} let us assume without loss of generality that the winning segment's tail $A_W$ has clockwise heading. Label the empty locations in the ring at time $t=0$  (i.e., the locations not containing locusts at time $t=0$) as $E_1, E_2, \ldots E_{n - m}$, sorted by their counterclockwise distance to $A_W$ at time $t=0$, such that $E_1$ minimizes $dist^{cc}(E_i, A_W)$, $E_2$ has the second smallest distance, and so on. We will treat these empty locations as having persistent identities: whenever a locust $A$ moves from its current location to $E_i$, we will instead say that $A$ and $E_i$ \textit{swapped}, and so $E_i$'s new location is $A$'s old location.

We say a location $E_i$ is \textit{inside} the segment $SW(t)$ at time $t$ if the two locusts which have the smallest clockwise and counterclockwise distance to $E_i$ respectively are both in $SW(t)$. Otherwise, we say that $E_i$ is \textit{outside} $SW(t)$. A locust or location $A$ is said to be \textit{between} $E_i$ and $E_j$, $j > i$, if $dist^{c}(E_i,A) < dist^{c}(E_i, E_j)$.

\begin{definition}All empty locations are initially \textbf{blocked}. A location $E_i$ becomes \textbf{unblocked} at time $t+1$ if all empty locations $E_j$ such that $j < i$ are unblocked at time $t$, and a locust from  $SW(t)$ swapped locations with $E_i$ at time $t$. Once a location becomes \textbf{unblocked}, it remains that way forever. 
\end{definition}

\begin{Lemma}
\label{unblockedoroutsidelemma}
There is some time step $t^* \leq n - m$ such that:

\begin{enumerate}
    \item Every blocked empty location $E$ is outside $SW(t^*)$ (if any exist)
    \item At least $t^*$ empty locations are unblocked.
\end{enumerate}

\end{Lemma}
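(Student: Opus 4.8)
The plan is to track the quantity $U(t)$, the number of unblocked empty locations at the beginning of time $t$. By the unblocking rule, a location $E_i$ can only become unblocked once every $E_j$ with $j<i$ is already unblocked, so the unblocked set is always a prefix $\{E_1,\dots,E_{U(t)}\}$, and $U(t)$ is nondecreasing with $U(0)=0$. The lemma will follow once I establish a single \emph{progress claim}: at any time $t$ for which condition (1) fails --- i.e., some blocked empty location lies inside $SW(t)$ --- one has $U(t+1)>U(t)$.

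Granting this claim, I define $t^{*}$ to be the first time step at which condition (1) holds (every blocked empty location is outside $SW(t^{*})$). For every $t<t^{*}$ condition (1) fails, so $U$ strictly increases on each of the steps $0\to 1,\dots,(t^{*}-1)\to t^{*}$; hence $U(t^{*})\ge t^{*}$, which is exactly condition (2). Since $U(t^{*})\le n-m$ (there are only $n-m$ empty locations), this also forces $t^{*}\le n-m$. Existence of $t^{*}$ is free: condition (1) can fail at most $n-m$ times, because each failure is charged to a strict unit increase of the bounded quantity $U$, so (1) must hold within $n-m+1$ steps.

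It therefore remains to prove the progress claim, which is the heart of the argument. First I will show that the blocked empty locations lying inside $SW(t)$ are exactly the lowest-indexed blocked locations, so that ``some blocked empty lies inside $SW(t)$'' is equivalent to ``the frontier empty $E_{U(t)+1}$ lies inside $SW(t)$.'' The remaining and hardest step is to show that when $E_{U(t)+1}$ is inside $SW(t)$, an $SW(t)$ locust actually swaps into it during step $t$, which by the unblocking rule (all lower indices already unblocked) makes $E_{U(t)+1}$ unblocked at $t+1$. Since $A_W$ is the clockwise tail of $SW(t)$, the nearest locust clockwise of $E_{U(t)+1}$ marches away from it, so the only candidate swap is the locust immediately counterclockwise of $E_{U(t)+1}$ advancing clockwise into it; I must rule out the possibility that this counterclockwise neighbour is itself empty.

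The main obstacle is precisely this last point, and I expect to handle it by maintaining a structural invariant describing where unblocked empties sit. In a single track a clockwise-marching block obeys backward gap propagation: the head consumes the empty location directly in front of it, and the resulting gap ripples counterclockwise through the segment and is ejected behind the tail $A_W$. I will argue inductively that the unblocked empties always form a contiguous arc immediately counterclockwise of $A_W$, while the frontier empty $E_{U(t)+1}$ is always the location the segment is currently pressing against --- either the gap directly in front of the head or an interior gap whose counterclockwise neighbour is an $SW(t)$ locust. Under this invariant the counterclockwise neighbour of an inside frontier empty is never empty, the required swap occurs, and $U$ advances by one. Verifying that this invariant survives one synchronous time step --- in particular the interaction of $A_W$'s own motion with a possible conflict resolution at the head --- is the technical crux of the proof.
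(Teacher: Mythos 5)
Your counting framework is sound and matches how the paper ultimately closes this argument: the unblocked set is a prefix by definition, one unblocking per step forces both $t^*\le n-m$ and $U(t^*)\ge t^*$, so everything reduces to the progress claim. The problem is that the proposal stops exactly where the lemma's mathematical content begins. Both of the facts you promise to prove --- (a) that the blocked empties inside $SW(t)$ are the lowest-indexed blocked ones, and (b) that an inside frontier empty actually gets swapped by an $SW(t)$ locust --- are left as promissory notes, and you yourself flag the verification of your invariant as ``the technical crux.'' So what you have is a plan, not a proof.

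Worse, the invariant you propose to carry that crux is false as stated. A newly unblocked empty does not sit ``immediately counterclockwise of $A_W$'': it sits in the interior of the segment, at the old position of the locust that just swapped into it, and it then drifts counterclockwise only one cell per time step. Concretely, put $A_W, L_1, L_2, L_3$ at positions $0,1,2,3$ and $E_1$ at position $4$ with the head beyond it: at time $0$ only $L_3$ can advance, so $E_1$ is unblocked at time $1$ but located at position $3$, deep inside the segment and far from $A_W$, needing several further steps to reach and pass the tail. In general the unblocked empties are scattered through the segment's interior, so an attempt to verify your contiguous-arc invariant would fail, and you would have to restructure the argument. What actually closes the gap --- and is what the paper does --- is a purely local, one-step fact that needs no global picture: when $E_i$ becomes unblocked at time $t$, the locust that swapped with it now lies \emph{between} $E_i$ and $E_{i+1}$; since the cyclic order of empty locations is preserved, no empty lies between $E_i$ and $E_{i+1}$, so the cell immediately counterclockwise of $E_{i+1}$ holds a locust. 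If $E_{i+1}$ is inside $SW(t)$, that locust is in $SW(t)$ and moves into $E_{i+1}$ with certainty (the clockwise neighbour of $E_{i+1}$ is also in $SW(t)$ and marches away, so no tiebreak can occur). This yields the pipelined induction ``$E_i$ unblocked at time $t$ implies $E_{i+1}$ unblocked at time $t+1$ or outside $SW(t+1)$,'' which is exactly your progress claim and subsumes your claim (a). The global ``empties pile up behind $A_W$'' picture you were reaching for is real, but it is only needed --- and only established --- in the paper's next lemma (Lemma \ref{noswapwithunblockedlemma}), not here.
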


\begin{proof}
If $E_1$ is outside $SW(0)$, then the same must be true for all other empty locations, so $t^* = 0$ and we are done. Otherwise, $E_1$ becomes unblocked at time $t=1$. If $E_i$ becomes unblocked at time $t$, then at time $t$, it cannot be adjacent to $E_{i+1}$, since the locust that swapped with $E_i$ in the previous time step is now between $E_i$ and $E_{i+1}$. By definition, there are no empty locations $E_j$ between $E_i$ and $E_{i+1}$. Consequently, if $E_{i+1}$ is inside $SW(t)$ at time $t$, it will swap with a locust of $SW(t)$ at time $t$, and become unblocked at time $t+1$. If $E_{i+1}$ is outside the segment at time $t$, it will become unblocked at the first time step $t'>t$ that begins with $E_{i+1}$ inside  $SW(t')$. Hence, if $E_i$ becomes unblocked at time $t$, then $E_{i+1}$ becomes unblocked at time $t+1$ or $E_{i+1}$ is outside $SW(t+1)$ at time $t+1$.

Let $t^*$ be the smallest time where there are no blocked empty locations inside $SW(t^*)$. By the above, at every time step $t \leq t^*$ an empty location becomes unblocked, hence there are at least $t^*$ unblocked empty locations at time $t^*$. Also, since there are $n - m$ empty locations, this implies $t^* \leq n - m$. \qed\end{proof}

\begin{Lemma}
\label{noswapwithunblockedlemma}
There is no time $t < T_{stable}$ where an unblocked location is clockwise-adjacent to $H_W(t)$ (i.e., there is no time $t$ where an  unblocked empty location $E$ is located one step clockwise from $H_W(t)$).
\end{Lemma}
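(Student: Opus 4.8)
The plan is to argue by induction on $t$, maintaining the invariant that at the beginning of every time step $t < T_{stable}$ every empty location that is clockwise-adjacent to $H_W(t)$ is blocked; this is exactly the negation of the event forbidden by the lemma. The base case $t=0$ is immediate, since all empty locations start blocked. For the inductive step I would fix the earliest time at which an unblocked empty location $E$ becomes clockwise-adjacent to $H_W$, and trace which of a small number of events could have produced this configuration during the previous step.

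First I would record two structural facts about how empty locations move relative to the (clockwise) winning segment. Since $A_W$ marches clockwise, any empty location that is strictly interior to $SW$---that is, flanked on its counterclockwise side by a locust of $SW$---is filled on each step by that locust advancing into it, and this advance is never contested, because the cell clockwise of the empty location is occupied by another clockwise locust of $SW$, so no counterclockwise locust competes for it. Hence interior empty locations drift one step counterclockwise, toward the tail, on every step, and can never overtake $H_W$ by their own motion. Second, at the instant an $E_i$ becomes unblocked, the $SW$-locust that swapped into it lies immediately clockwise of $E_i$ (as already used in the proof of Lemma~\ref{unblockedoroutsidelemma}); consequently a freshly unblocked empty location sits behind an $SW$-locust and in particular is never clockwise-adjacent to $H_W$ at the moment it unblocks.

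With these facts in hand I would rule out the two remaining ways $E$ could appear in front of the head. If $H_W$ advanced clockwise into an empty cell, that cell was blocked by the inductive hypothesis; and because empty locations farther clockwise from $A_W$ carry strictly larger indices, the prefix rule in the definition of unblocking forbids any of them from being unblocked while the nearer cell is still blocked, so $H_W$ cannot step forward onto an unblocked empty location either. The delicate case is when $|SW|$ shrinks because $H_W$ loses a conflict and the head ``retreats'' to the previous locust of the segment. Here I would use the precise timing of the model: a conflict is resolved at the end of a step between locusts that are already adjacent at its beginning, so during that very step the interior locusts of $SW$ still advance clockwise. Thus an empty location that was sitting immediately behind the old head is pushed one step counterclockwise, ending up behind the new head rather than in front of it, before the losing locust actually flips; chasing this bookkeeping through shows the invariant is preserved.

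I expect the conflict-loss (head-retreat) case to be the main obstacle: it is the only scenario in which the head moves counterclockwise, and a naive reading in which the flip precedes the within-step motion would actually place an unblocked empty location in front of the retreated head. The crux is therefore to exploit the order of operations within a step---horizontal advances first, heading flips last---together with the monotone counterclockwise drift of interior empty locations, so as to guarantee that any empty location near a retreating head is already behind it by the start of the next step.
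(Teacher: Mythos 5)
Your proposal is sound (under the step-ordering you make explicit) and rests on the same structural picture as the paper's proof, but it is organized along a genuinely different axis. The paper inducts on the \emph{index} of the empty locations rather than on time: it argues that each $E_i$, once unblocked, sits immediately behind a locust of $SW(t)$ and then drifts counterclockwise on every step until it swaps with $A_W$, after which it parks behind the tail for all $t < T_{stable}$; the inductive content is that $E_{i+1}$ can never catch up to $E_i$, so the same trajectory argument applies to each location in turn, and ``never clockwise-adjacent to $H_W(t)$'' is read off from the trajectory. Your time-induction with the invariant ``every empty cell clockwise-adjacent to $H_W(t)$ is blocked'' buys a cleaner event-by-event case analysis: the head-advance case is killed by the prefix rule together with the (correct, and worth stating) fact that the cyclic order of $A_W, E_1, \ldots, E_{n-m}$ is preserved, and the head-retreat case is isolated as the one real difficulty.

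That head-retreat case is also where your proposal is \emph{stronger} than the paper's own argument, not merely different. The paper asserts the monotone counterclockwise drift and that unblocked locations never swap with $H_W(t)$, without discussing what happens when the locust immediately clockwise of an unblocked cell loses a conflict and flips; your point that conflicts are decided by beginning-of-step adjacency and applied only at the end of the step, while the interior locusts of $SW(t)$ advance uncontested in between, is exactly what pushes the empty cell behind the retreating head before the flip takes effect. Be aware that this timing convention is load-bearing rather than cosmetic: if conflicts were instead triggered by adjacency \emph{after} the horizontal moves, the lemma would be false outright --- a head can win a tiebreak into the empty cell in front of it (thereby unblocking it) and lose the resulting conflict within the same step, leaving an unblocked cell one step clockwise of the new, retreated head. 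Your reading is the one consistent with how the paper uses conflicts elsewhere (the deadlock lemma and the proof of Lemma \ref{nochangecount}), and the tiebreak interaction you handle is precisely the spot the paper's proof glosses over; the only residual criticism is that your writeup is a plan, and the ``bookkeeping'' you defer (in particular, that an unblocked cell always has an $SW(t)$ locust directly behind it, even when several unblocked cells coexist inside the segment) is where the remaining work lies.
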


\begin{proof}
First consider what happens when $E_1$ becomes unblocked: it swaps its location with a locust in $SW(t)$, and since $E_1$ is the clockwise-closest empty location to $A_W$, the entire counterclockwise path from $E_1$ to $A_W$ consist only of locusts from $SW(t)$. Hence $E_1$ will move counterclockwise at every time step, until it swaps with $A_W$. Once it swaps with $A_W$, $E_1$ will not swap with another locust at all times $t < T_{stable}$, since for that to occur we must have that $b(A_W^{\leftarrow})=b(A_W)$, which is impossible since by definition $A_W$ remains a segment tail until $t=T_{stable}$. $E_1$ does not swap with $H_W(t)$ while $E_1$ moves counterclockwise towards $A_W$ nor after $E_1$ and $A_W$ swap as long as the swarm is unstable, hence there is no time step $t < T_{stable}$ when $E_1$ is unblocked and swaps with $H_W(t)$.

Now consider $E_2$. $E_2$ becomes unblocked at least one time step after $E_1$, and there is at least one locust in $SW(t)$ which is between $E_1$ and $E_2$ at the time step  $E_1$ becomes unblocked (in particular, the locust in $SW(t)$ that swapped with $E_1$ must be between $E_1$ and $E_2$ at that time). Since $E_1$ subsequently moves towards $A_W$ at every time step until they swap, $E_2$ cannot  become adjacent to $E_1$ until they both swap with $A_W$. Hence the location one step counterclockwise to $E_2$ must always be a locust until $E_2$ swaps with $A_W$, meaning that similar to $E_1$, $E_2$ also moves counterclockwise towards $A_W$ at every time step after $E_2$ becomes unblocked until they swap locations. Consequently, just like $E_1$, there is no time step $t < T_{stable}$ when $E_2$ is unblocked and swaps with $H_W(t)$.

More generally, by a straightforward inductive argument, the exact same thing is true of $E_i$: once it becomes unblocked, it moves counterclockwise towards $A_W$ at every time step until it swaps with $A_W$. Thus, upon becoming unblocked, $E_i$ does not swap with $H_W(t)$ as long as $t < T_{stable}$.\qed\end{proof}

Using Lemmas \ref{unblockedoroutsidelemma} and \ref{noswapwithunblockedlemma}, let us prove Lemma \ref{nochangecount}. 

\begin{proof}
If, at the beginning of time step $t$, $H_W(t)$ is adjacent to a locust from a different segment, then $|SW(t)|$ will change at the end of this time step due to the locusts' conflict. Hence, to prove Lemma \ref{nochangecount}, it suffices to show that out of all the time steps before time $T_{stable}$, $H_W(t)$ is not adjacent to the head of a different segment in at most $2(n - m)$ different steps in expectation.

If all empty locations are unblocked at time $n - m$, then by Lemma \ref{noswapwithunblockedlemma}, $H_W(t)$ conflicts with the head of another segment at all times $t \geq n - m$. Therefore, $|SW(t)|$ will change at every time step $n - m < t < T_{stable}$, which is what we wanted to prove.

If all empty locations are not unblocked by time $n - m$, then by Lemma \ref{nochangecount}, there must be some time $t^* \leq n - m$ where at least $t^*$ empty locations are unblocked and all blocked empty locations are outside $SW(t^*)$. Let $E_j$ be the minimal-index blocked location which is outside $SW(t^*)$ at time $t^*$. Since there are no blocked empty locations inside $SW(t^*)$, all locations $E_i$ with $i < j$ are unblocked. Hence, $E_j$ will become unblocked as soon as it swaps with the head of the winning segment. Since (by the clockwise sorting order of $E_1, E_2, \ldots$) $E_{j+1}$ cannot swap with the winning segment head  before $E_j$ is unblocked, $E_{j+1}$ will also become unblocked after the first time step where it swaps the winning segment head. The same is true for $E_{j+2}, \ldots E_{n-m}$. Hence, every empty location that $H_W(t)$ swaps with after time $t^*$ becomes unblocked in the subsequent time step. By Lemma \ref{nochangecount}, the total swaps $H_W(t)$ could have made before time $T_{stable}$ is thus most $t^* + (n-m-j) \leq n - m$. Whenever an empty location is one step clockwise from $H_W(t)$, they will swap with probability at least $0.5$ (the swap is not guaranteed, since it is possible the location is also adjacent to the head of another segment, and hence a tiebreaker will occur in regards to which segment head occupies the empty location in the next time step). Consequently, the expected number of time steps $H_W(t)$ is not adjacent to the head of another segment is bounded by $2(n-m)$.\qed\end{proof}

The proof of Theorem \ref{k1bounds} now follows.

\begin{proof} Lemma \ref{nochangecount} tells us that before time $T_{stable}$, $|SW(t)|$ does not change in at most $2(n - m)$ time steps in expectation, whereas Lemma \ref{changecountlemma} tells us that the expected number of changes to $|SW(t)|$ before time $T_{stable}$ is at most $m^2$. Hence, for any configuration of $m$ locusts on a ring of track length $n$, $\mathbb{E}[T_{stable}] \leq m^2 + 2(n - m)$. 

Let us now show a locust configuration for which $\mathbb{E}[T_{stable}] = \Omega(m^2+n)$, so as to asymptotically match the upper bound we found.  Consider a ring with $k=1$, $m$  divisible by $2$, and an initial locust configuration where locusts are found at coordinates $(0,1), (1,1), \ldots (m/2,1)$ with clockwise heading and at $(-1,1), (-2,1), \ldots (-m/2-1,1)$ with counterclockwise heading, and the rest of the ring is empty. This is a ring with exactly two segments, each of size $m/2$. Since after every  conflict, the segment sizes are offset by $1$ in either direction, the expected number of conflicts between the heads of the segments that is necessary for stabilization is equal to the expected number of steps a random walk with absorbing boundaries at $m/2$ and $-m/2$ takes to end, which is $m^2/4$ (see \cite{epstein2012gamblersruin}). Since the heads of the segments start at distance $n-m$ from each other, it takes $\Omega(n-m)$ steps for them to reach each other. Hence the expected time for this ring to stabilize is $\Omega(m^2 + n - m)$.\qed\end{proof}

\subsection{Locusts on wide ringlike arenas ($k > 1$)}

Let us now investigate the case where $m$ locusts are marching on a cylinder of height $k > 1$ partitioned into $k$ tracks of length $n$. The first question we should ask is whether, just as in the case of the $k=1$ setting, there exists some time $T$ where all locusts have identical heading. The answer is ``not necessarily'': consider for example the case $k=2$ where on the $k=1$ track, all locusts march clockwise, and on the $k=2$ track, all locusts march counterclockwise. According to the track-switching conditions (Section \ref{modelsection}), no locust will ever switch tracks in this configuration, hence the locusts will perpetually have opposing headings. As we shall prove in this section, on the cylinder, swarms stabilize \textit{locally}--meaning that eventually, all locusts \textit{on the same track} have identical heading, but this heading may be different between tracks.

Let us say that the $y$th track is stable if all locusts whose location is $(\cdot, y)$ have identical heading. Note that once a track becomes stable, it remains this way forever, as by the model, the only locusts that may move into the track must have the same heading as its locusts.  Let $T_{stable}$ be the first time when every all the $k$ tracks are stable. Our goal will be to prove the following asymptotic bounds on $T_{stable}$:

\begin{Theorem}
\label{klargebounds}
$\mathbb{E}[T_{stable}] =  \mathcal{O}\large(\min(\log (k)n^2, mn + m^2)\large)$.\footnote{In an upcoming extended version of this work, we refine these bounds to $\mathcal{O}\large(\min(\log (k)n^2, mn)\large)$}
\end{Theorem}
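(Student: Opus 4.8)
The plan is to prove the two upper bounds \(\mathbb{E}[T_{stable}] = \mathcal{O}(mn + m^2)\) and \(\mathbb{E}[T_{stable}] = \mathcal{O}(\log(k)\, n^2)\) separately, after which the claim follows by taking their minimum. Both bounds rest on the same reduction: writing \(X_y\) for the (random) first time at which track \(y\) becomes stable, we have \(T_{stable} = \max_{1 \le y \le k} X_y\), since a stabilized track stays stable forever. The core technical step, which I will call the \emph{per-track lemma}, is to show that each \(X_y\) behaves essentially like the single-track stabilization time of Theorem \ref{k1bounds}: its expectation is \(\mathcal{O}(m_y^2 + (n - m_y))\), where \(m_y \le n\) bounds the number of locusts on track \(y\), and moreover its tail decays geometrically at scale \(n^2\), i.e. \(\Pr[X_y > c\, n^2 s] \le 2^{-\Omega(s)}\) for a suitable constant \(c\). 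I would obtain this by re-running the winning-segment argument (Definition \ref{winningsegment}, Lemmas \ref{changecountlemma} and \ref{nochangecount}) inside a single track: the number of size-changes of the track's winning segment is dominated by a random-walk hitting time (yielding both the \(m_y^2\) mean and the geometric tail), while the no-change steps are bounded by the empty-location unblocking argument (yielding the \(n - m_y\) term).

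Given the per-track lemma, the first bound needs only expectations: \(\mathbb{E}[T_{stable}] = \mathbb{E}[\max_y X_y] \le \sum_y \mathbb{E}[X_y] = \sum_y \mathcal{O}(m_y^2 + (n - m_y))\). Since \(\sum_y m_y = m\) we have \(\sum_y m_y^2 \le (\sum_y m_y)^2 = m^2\), and \(\sum_y (n - m_y) \le kn\). Here I would invoke the standing model assumption that every track holds at least two locusts at all times, which forces \(m \ge 2k\), i.e. \(k \le m/2\), so that \(kn \le mn/2\). Combining, \(\mathbb{E}[T_{stable}] = \mathcal{O}(m^2 + mn)\).

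For the second bound I would use the tail estimate together with a union bound over the \(k\) tracks, which crucially does \emph{not} require the \(X_y\) to be independent. Since \(m_y \le n\), the per-track lemma gives \(\mathbb{E}[X_y] = \mathcal{O}(n^2)\) with \(\Pr[X_y > c\, n^2 s] \le 2^{-\Omega(s)}\); hence \(\Pr[\max_y X_y > c\, n^2 s] \le k\, 2^{-\Omega(s)}\). Integrating this tail (the standard maximum-of-\(k\)-subexponential-variables computation) yields \(\mathbb{E}[T_{stable}] = \mathbb{E}[\max_y X_y] = \mathcal{O}(\log(k)\, n^2)\).

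The crux — and the step I expect to cause the most trouble — is the per-track lemma, because track-switching destroys the closed single-track system that Lemmas \ref{changecountlemma} and \ref{nochangecount} relied upon. Two effects must be controlled. First, a \emph{switch-out}: the winning segment's head satisfies the switching precondition whenever its opposite-heading front neighbour is non-adjacent, so it may leave the track, shrinking the winning segment with no conflict and breaking the coupling to a fair random walk. Second, a \emph{switch-in}: a locust may enter track \(y\) to merge with a platoon of its own heading, which can be \emph{opposite} to the winning tail's, adding fresh opposition and re-blocking empty locations that the unblocking argument assumed stayed unblocked; this is also what makes the time-varying per-track counts \(m_y\) delicate. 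A natural approach is to pass to a global potential — for instance the number of locusts whose heading differs from the eventual stable heading of the track they currently occupy — and argue that each track-switch (which is simultaneously a switch-out of one track and a switch-in to another) never increases it, since a departing locust only reinforces a same-heading platoon on the track it enters. Bounding the total number of switch events alongside the conflict count would then recover an \(\mathcal{O}(m^2)\)-type budget on the disruption to the per-track walks. Finally, upgrading the single-track \emph{expectation} bounds to the geometric \emph{tail} bounds required above should follow from restarting the dominating random walk every \(\Theta(n^2)\) steps and noting that each such block reaches an absorbing boundary with constant probability.
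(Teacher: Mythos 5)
Your outer skeleton (write $T_{stable}=\max_y X_y$, prove the two bounds separately, get $\mathcal{O}(mn+m^2)$ from a sum of expectations and $\mathcal{O}(\log(k)n^2)$ from a geometric tail plus a union bound over tracks) is reasonable and even close in spirit to the paper's, but the entire proposal rests on the ``per-track lemma,'' and the fix you sketch for it does not work. Concretely: (i) your global potential --- the number of locusts whose heading differs from \emph{the eventual stable heading of the track they currently occupy} --- is not a function of the current configuration; the eventual stable heading of a track is a random variable resolved only at stabilization, and it depends on the very conflict outcomes you are trying to bound, so a monotonicity argument about it is circular. (ii) Even read as a trajectory-dependent quantity, the monotonicity claim is false: a switching locust joins a platoon of its \emph{own} heading, but nothing prevents the destination track from eventually stabilizing to the \emph{opposite} heading (the platoon it joins may itself be doomed), in which case the switch strictly increases your potential. (iii) Even if the potential were non-increasing, that would neither bound the number of switch events nor restore the coupling of Lemma \ref{changecountlemma}: the winning segment's head can satisfy the switching preconditions and leave the track, so $|SW(t)|$ shrinks with no compensating probability-$\frac{1}{2}$ growth event, and opposite-heading locusts can switch in and replenish the opposition; for the same reason the counts $m_y$ are not well defined and $\sum_y m_y=m$ fails, since one locust can be counted against several tracks over time. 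So the crux you correctly identified remains open in your write-up.

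The paper closes exactly this hole by abandoning per-track winning segments when $k>1$ and introducing \emph{compact} sets and \emph{deadlock}: a compact platoon has a locust about to move horizontally into each of its gaps, so by model rule (3) no locust can switch into it, and a deadlocked pair of segments becomes a closed system. Lemma \ref{deadlockinoneringlemma} controls the pre-deadlock phase with two potentials --- $L(t)$, the gaps between maximal compact sets, and $F(t)$, internal empty slots plus population --- showing $F$ strictly decreases with every switch-out and never increases with switch-ins; this is precisely the ``switch budget'' your sketch needs but does not obtain, and it yields deadlock within $3d\leq 3n$ steps (Lemma \ref{timetodeadlockalllemma}). Deadlocked pairs then evolve as genuine gambler's-ruin walks: Theorem \ref{mnm2bound} sums their resolution times over pair eliminations to get $\mathcal{O}(mn+m^2)$, and Theorem \ref{kn2bound} obtains $\mathcal{O}(\log(k)n^2)$ by applying Lemma \ref{randomwalklimitlemmaksqrt} --- a max-of-$k$-absorbing-random-walks estimate that is the rigorous counterpart of your tail-plus-union-bound step --- to the deadlocked pairs across tracks, iterating on the maximum per-track segment count $\mathcal{M}_t$ rather than on per-track stabilization times. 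In short, your reduction and tail-integration steps mirror the paper, but the object you apply them to is never shown to behave like a random walk; the deadlock machinery is the missing ingredient.
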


The bound $\mathcal{O}(mn + m^2)$ is more accurate when $m$ is small ($m \ll \sqrt{\log (k)} n$), and the bound  $\mathcal{O}(\log (k)n^2)$ is more accurate when $m$ is large.

Recalling Definition \ref{segmentdefinition}, each locust in the system belongs to some segment. Each track has its own segments. Locusts leave and join segments due to conflicts, or when they pass from their current segment to a track on a different segment. In this section, we will treat segments as having persistent identities, similar to $SW$ in the previous section. We introduce the following notation:

\begin{definition}
\label{segmentidentitydefinition}
Let $S$ be a segment whose tail is $A$ at some time $t_0$. We define $S(t)$ to be the segment whose tail is $A$ at the beginning of time $t$. If $A$ is not a segment tail at time $t$, then we will say $S(t) =  \emptyset$ (this can happen once $A$ changes its heading or moves to another track, or due to another segment merging with $S(t)$ which might cause $b(A^{\leftarrow})$ to equal $b(A)$, thus making $A$ no longer the tail).

Furthermore, define $S_1$ to be the segment tail of $S$ and $S_{i+1} = S_i^{\rightarrow}$. 

\end{definition}

Let us give a few examples of the notation in Definition \ref{segmentidentitydefinition}. Suppose at time $t_1$ we have some segment $S$. Then the tail of $S$ is $S_1$, and the head is $S_{|S|}$. $S(t)$ is the segment whose tail is $S_1$ at time $t$, hence $S(t_1) = S$. Finally, $S(t)_{|S(t)|}$ is the head of the segment $S(t)$. 

In the $k>1$ setting, locusts can frequently move between tracks, which complicates our study of $T_{stable}$. Crucially, however, the number of segments on any individual track is non-increasing. This is because, first, as shown in the previous section, locusts moving and conflicting on the same track can never create new segments. Second, by the locust model, locusts can only move into another track when this places them between two locusts that already belong to some (clockwise or counterclockwise) segment. 

That being said, locusts moving in and out of a given track makes the technique we used in the previous section unfeasible. In the following definitions of \textit{compact} and \textit{deadlocked} locust sets, our goal is to identify configurations of locusts on a given track which locusts cannot enter from another track. Such configurations can be studied locally, focusing only on the track they are in. In the next several lemmas, we will bound the amount of time that can pass without either the number of segments decreasing, or all segments entering into deadlock.

\begin{definition}
We call a sequence of locusts  $X_1, X_2, \ldots$ \textbf{compact} if $X_{i+1} = X_i^{\rightarrow}$ and either:

\begin{enumerate}
    \item every locust in $X$ has clockwise heading and for every $i < |X|$, $dist^{c}(X_i,X_{i+1})\leq 2$, or
    
    \item every locust in $X$ has counterclockwise heading and for every $i < |X|$, $dist^{cc}(X_i,X_{i+1})\leq 2$.
\end{enumerate}

An unordered set of locusts is called compact if there exists an ordering of all its locusts that forms a compact sequence.
\end{definition}

\begin{definition}
Let $X = \{X_1, X_2, \ldots X_j\}$ and $Y = \{Y_1, Y_2, \ldots Y_k\}$ be two compact sets, such that the locusts of $X$ have clockwise heading and the locusts of $Y$ have counterclockwise heading. $X$ and $Y$ are \textbf{in deadlock} if $dist^c(X_j,Y_k) = 1$. (See Figure \ref{deadlockedfigure})
\end{definition}

A compact set of locusts $X$ is essentially a platoon of locusts all on the same track which are heading in one direction, and are all jammed together with at most one empty space between each consecutive pair. As long as $X$ remains compact, no new locusts can enter the track between any two locusts of $X$, because the model states that locusts do not move vertically into empty locations to which a locust is attempting to move horizontally, and the locusts in a compact set are always attempting to move horizontally to the empty location in front of them.

\begin{figure}[!ht]
\centerline{\includegraphics[scale=.35]{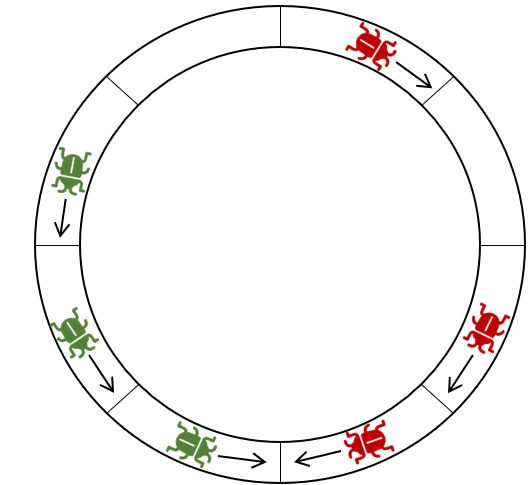}}
\caption{Two segments in deadlock, colored green and red.}
\label{deadlockedfigure}
\vspace{-6mm}
\end{figure}

\begin{definition}
A maximal compact set is a set $X$ such that for any locust $A \notin X$, $X \cup {A}$ is not compact.
\end{definition}

A straightforward observation is that locusts can only belong to one maximal compact set:

\begin{observation}
\label{maximalcompactsetobservation}
Let $A$ be a locust. If $X$ and $Y$ are maximal compact sets containing $A$, then $X = Y$.
\end{observation}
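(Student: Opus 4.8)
The plan is to show that both $X$ and $Y$ coincide with a single canonical ``maximal arc'' of locusts determined by $A$, from which $X = Y$ follows immediately.

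First I would record the basic structural fact that any compact set is an \emph{arc} of consecutive locusts on the track. In a compact ordering $X_{i+1} = X_i^{\rightarrow}$, each $X_{i+1}$ is the immediate front-neighbour of $X_i$, so no locust lies strictly between $X_i$ and $X_{i+1}$; hence the members of a compact set occupy consecutive positions in the cyclic ordering of the track's locusts. Moreover, by definition all of them share a single heading and consecutive members are separated by clockwise distance at most $2$ (equivalently, at most one empty cell). Since $A \in X$ and $A \in Y$, both sets consist of locusts with heading $b(A)$; assume without loss of generality that this heading is clockwise.

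Next I would define the canonical object. Starting from $A$, extend clockwise through $A^{\rightarrow}, (A^{\rightarrow})^{\rightarrow}, \ldots$, stopping the first time a front-neighbour either has heading $\neq b(A)$ or lies at clockwise distance $\geq 3$; extend counterclockwise analogously. Let $M$ be the resulting arc. By construction $M$ is an arc of consecutive same-heading locusts with every consecutive gap at most $2$, so $M$ is itself compact, and likewise every sub-arc of $M$ is compact (order it from its counterclockwise end and read off front-neighbours). Then I would prove the two containments $X \subseteq M$ and $Y \subseteq M$: because $X$ is an arc containing $A$, if it reached beyond the clockwise endpoint $h$ of $M$ it would contain $h$ and $h^{\rightarrow}$ as consecutive members, forcing $b(h^{\rightarrow}) = b(A)$ and $dist^{c}(h, h^{\rightarrow}) \leq 2$; but these are exactly the conditions whose failure defined the clockwise endpoint of $M$, a contradiction, and the counterclockwise endpoint is handled symmetrically. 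Finally, maximality forces equality: if $X \subsetneq M$, some endpoint of $X$ is an interior point of $M$, so the neighbouring locust $B \in M \setminus X$ just outside that endpoint yields a strictly larger sub-arc $X \cup \{B\} \subseteq M$, which is compact — contradicting that $X$ is a maximal compact set. Hence $X = M$, and symmetrically $Y = M$, giving $X = Y$.

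The main obstacle I anticipate is the bookkeeping around the ``arc'' picture in degenerate cases — in particular when a heading is shared all the way around the track, so that the extension process wraps and $M$ becomes the entire track, and the borderline reading of the distance bound (a gap of one empty cell, $dist^{c} = 2$, is allowed, whereas two or more empty cells, $dist^{c} \geq 3$, constitutes a break). These need to be stated carefully so that ``consecutive members of a compact set'' and ``the first break'' are unambiguous; but once the arc characterization is pinned down, the containment-and-maximality argument is routine.
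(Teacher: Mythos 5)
Your proof is correct. Note, however, that the paper does not actually prove this observation at all --- it is introduced with the words ``a straightforward observation'' and stated without any accompanying argument --- so there is no proof of record to compare against; your write-up is a careful discharge of what the authors treat as immediate. Your route (characterize compact sets as arcs of consecutive same-heading locusts with consecutive gaps at most $2$; build the canonical maximal arc $M$ generated by $A$; show $X \subseteq M$ and $Y \subseteq M$ because leaving $M$ would force a consecutive pair violating exactly the break condition defining $M$'s endpoints; then use maximality to force $X = M = Y$) is sound, and you are right that the only delicate points are the wrap-around case and the reading of the gap bound, both of which you flag and handle. For comparison, there is a slightly shorter variant of the same idea that avoids constructing $M$: since $X$ and $Y$ are both arcs through $A$ consisting of locusts with heading $b(A)$ and internal gaps at most $2$, their union is again such an arc (if the union is the full track, order it starting just after the unique possible oversized gap, which the definition of compactness never inspects), hence $X \cup Y$ is compact; maximality of $X$ and of $Y$ then gives $X = X \cup Y = Y$ directly. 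Either way, the essential content is the arc characterization of compact sets, which you identified correctly.
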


\begin{Lemma}
\label{deadlocksegmentstayslemma}
Let $X$ and $Y$ be two sets of locusts in deadlock at the beginning of time $t$. Then at every subsequent time step, the locusts in $X \cup Y$ can be separated into sets $X'$ and $Y'$ that are in deadlock, or the locusts in $X \cup Y$ all have identical heading.
\end{Lemma}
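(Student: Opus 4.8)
The plan is to reduce the claim to a single-step statement and then induct: it suffices to show that if $X$ and $Y$ are in deadlock at the beginning of some time step $s \ge t$, then at the beginning of $s+1$ the locusts of $X \cup Y$ are again split into two deadlocked compact sets, or they all share one heading (and once they share a heading they keep it, since no internal conflict can occur and condition (2) forbids an opposite-heading locust from entering beside them). Assume without loss of generality that $X = \{X_1, \dots, X_j\}$ has clockwise heading with head $X_j$, that $Y = \{Y_1, \dots, Y_k\}$ has counterclockwise heading with head $Y_k$, and that $dist^{c}(X_j, Y_k) = 1$. First I would argue that the deadlock is ``sealed'': no locust of $X \cup Y$ can switch tracks, because track-switching condition (1) requires a non-adjacent, opposite-heading front neighbour, whereas every interior locust has a same-heading front neighbour and each of the two heads is adjacent to its opposite-heading counterpart; moreover, by compactness together with condition (3), no locust from another track can descend into an interior gap or between the two heads, since a trailing platoon locust is always poised to move horizontally into each such empty cell. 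Hence I may analyze the evolution of $X \cup Y$ purely in terms of horizontal moves within the track plus the single conflict at the interface.

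Next I would track the horizontal movement. The two heads $X_j$ and $Y_k$ are each blocked by the other (each targets the cell the other occupies at the start of the step), so both stay put and remain adjacent; consequently the unique adjacent opposite-heading pair after the horizontal phase is still $X_j, Y_k$, and exactly one conflict occurs. Every non-head locust advances one step exactly when it has a gap of $2$ to its successor. The key bookkeeping is to verify, by a short case analysis on consecutive gap values, that (i) every interior gap stays in $\{1, 2\}$, so both platoons remain compact, and (ii) the gap separating each head from its predecessor collapses to exactly $1$: since $X_j$ (resp. $Y_k$) is stationary while $X_{j-1}$ (resp. $Y_{k-1}$) closes any gap of $2$, we get $dist^{c}(X_{j-1}, X_j) = 1$ and $dist^{c}(Y_k, Y_{k-1}) = 1$ at the end of the horizontal phase.

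Finally I would resolve the conflict. Exactly one head flips. If $Y_k$ flips to clockwise, it becomes the new clockwise head, so I set $X' = X \cup \{Y_k\}$ and $Y' = \{Y_1, \dots, Y_{k-1}\}$; both are compact, and by point (ii) $dist^{c}(Y_k, Y_{k-1}) = 1$, so $X'$ and $Y'$ are in deadlock. The case where $X_j$ flips is symmetric, using $dist^{c}(X_{j-1}, X_j) = 1$. The degenerate subcase is when the losing platoon had size $1$: then it becomes empty and all of $X \cup Y$ share a single heading, which is the second alternative of the statement. Induction over time steps then yields the lemma.

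The hard part will be the gap bookkeeping of the previous paragraph: one must check that simultaneous horizontal moves preserve the compactness bound $\le 2$ on every interior gap and, crucially, that the head-adjacent gap becomes exactly $1$ rather than $2$, since the deadlock definition demands distance precisely $1$ between the two heads after reassignment. A secondary technical point is making the ``sealing'' argument airtight: formally it requires the invariant that at all times each interior empty cell of a compact platoon is the horizontal target of the platoon locust behind it, so that condition (3) genuinely blocks every vertical intrusion.
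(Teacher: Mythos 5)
Your proposal is correct and follows essentially the same route as the paper's proof: reduce to a single step, use compactness plus condition (3) to seal the configuration against vertical intrusions, observe that the stationary heads force the head-adjacent gap to collapse to $1$ while all other gaps stay at most $2$, and then reassign the losing head to the winning set (with the size-$1$ degenerate case giving identical headings). Your gap bookkeeping is, if anything, stated a bit more carefully than the paper's, which loosely claims all consecutive distances are non-increasing when in fact a gap of $1$ can grow to $2$; the invariant that matters, and that you state, is that gaps remain in $\{1,2\}$ and the head-adjacent gap ends at exactly $1$.
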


\begin{proof}
Let $X = \{X_1, X_2, \ldots X_j\}$ and $Y = \{Y_1, Y_2, \ldots Y_k\}$ be compact sets such that $X_{i+1} = X_i^{\rightarrow}$, $Y_{i+1} = Y_i^{\rightarrow}$. It suffices to show that if $X$ and $Y$ are in deadlock at time $t$, they will remain that way at time $t+1$, unless $X \cup Y$'s locusts all have identical heading. Let us assume without loss of generality (``w.l.o.g.'') that $X$ has clockwise heading, and therefore $Y$ has counterclockwise heading. By the definition of deadlock, at time $t$, $X_j$ and $Y_k$ conflict, and the locust that loses joins the other set. Suppose w.l.o.g. that $X_j$ is the locust that lost. If $|X| = 1$, then the locusts all have identical heading, and we are done. Otherwise, set  $X' = \{X_1, \ldots X_{j-1}\}$ and $Y' = \{Y_1, Y_2, \ldots Y_k, X_j\}$. Note that since $X$ and $Y$ are compact at time $t$, no locust could have moved vertically into the empty spaces between pairs of locusts in $X \cup Y$. Furthermore the locusts of $X$ and $Y$ all march towards $X_j$ and $Y_k$ respectively, hence the distance between any consecutive pair $X_i, X_{i+1}$ or $Y_i, Y_{i+1}$ could not have increased. Thus $X'$ and $Y'$ are compact.

To show that $X'$ and $Y'$ are deadlocked at time $t+1$, we need just to show that $dist^c(X_{j-1}, X_j)$ is $1$ at time $t+1$. Since the distances do not increase, if $dist^c(X_{j-1}, X_j)$ was $1$ at time $t$, we are done. Otherwise $dist^c(X_{j-1}, X_j) = 2$ at time $t$, and since $X_j$ did not move (it was in a conflict with $Y_k$), $X_{j-1}$ decreased the distance in the last time step, hence it is now $1$.  
\qed\end{proof}

\begin{Lemma}
\label{deadlockinoneringlemma}
Suppose $P$ and $Q$ are the only segments on track $\mathcal{K}$ at time $t_0$, and $P$'s locusts have clockwise heading. Let $d = dist^c(P_1, Q_1)$. After at most 3d time steps, $P(t_0+3d)$ and $Q(t_0+3d)$ are in deadlock, or the track is stable.
\end{Lemma}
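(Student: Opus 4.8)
The plan is to reduce the two-dimensional, multi-track dynamics to a purely one-track process on $\mathcal{K}$ and then track a single monovariant measuring how spread out the two segments are. Set the clockwise direction to be increasing $x$, so that $P$'s locusts march clockwise and $Q$'s march counterclockwise, and let $P_{|P|}$, $Q_{|Q|}$ be the two segment heads; the empty cells clockwise from $P_{|P|}$ to $Q_{|Q|}$ form the \emph{front gap}, and those from $Q_1$ to $P_1$ form the \emph{back gap}. The first thing I would establish is that neither gap can ever be entered by a locust arriving from an adjacent track: both gaps are bounded on their two sides by locusts of opposite heading, so track-switching condition (2) (which demands that both new neighbours share the incoming locust's heading) can never be met inside either gap. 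Consequently the two-segment structure persists, the heads keep facing empty space, and since a tail's back-neighbour always has the opposite heading, neither $P(t)$ nor $Q(t)$ is emptied before the track stabilizes, in the sense of Definition \ref{segmentidentitydefinition}.

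Next I would run the argument in two phases. In the \emph{approach} phase, each head is unobstructed in front (the front gap is empty and cannot be invaded), so $P_{|P|}$ steps one cell clockwise and $Q_{|Q|}$ one cell counterclockwise every time step; hence $dist^{c}(P_{|P|},Q_{|Q|})$ shrinks by $2$ per step (the usual random tie-break resolving the final step), and the heads become adjacent after at most $\tfrac12 dist^{c}(P_{|P|},Q_{|Q|})$ steps. In the \emph{compaction} phase, with the heads jammed against each other at the interface, the locusts behind each head close ranks: an interior gap wider than one empty cell can only shrink when the locust in front of it is blocked, and blocking now propagates backward from the stalled interface, so each segment collapses to consecutive distances $\le 2$ within a number of steps bounded by its span $dist^{c}(P_1,P_{|P|})$, respectively $dist^{c}(Q_{|Q|},Q_1)$. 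The moment both segments are compact and the heads are adjacent we are in deadlock by definition, and Lemma \ref{deadlocksegmentstayslemma} keeps us there.

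To assemble a global time bound I would use the monovariant $d(t) = dist^{c}(P_1,Q_1)$. Because $P_1$ can only step clockwise and $Q_1$ only counterclockwise, both motions decrease $d(t)$, so it is non-increasing and starts at $d$; moreover the clockwise arc from $P_1$ to $Q_1$ decomposes exactly as (span of $P$) $+$ (front gap) $+$ (span of $Q$), which is what ties the two phase bounds back to $d$. The target is to show that the total number of steps until compaction is at most $3d$, with the extra factor absorbing steps in which $d(t)$ stalls while some interior gap is still closing.

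The main obstacle is the interaction between the head conflicts and compaction. At the interface the two heads conflict every step, and by the model the loser flips its heading and joins the other set; this is exactly the size random walk from the $k=1$ analysis (cf.\ Lemma \ref{changecountlemma}) and may terminate in stabilization when one segment vanishes. But a flip can also re-open the front gap whenever the new head has not yet compacted up to the old interface, forcing a fresh partial approach and shifting the interface into the losing segment. The crux of the proof is therefore to show that these re-openings, summed over the whole run, cost only a bounded multiple of $d$ additional steps---equivalently, that every step not spent closing the front gap is a step in which some locust advances toward its final packed position and never has to retreat---which is what upgrades a merely finite bound to the clean $3d$ bound.
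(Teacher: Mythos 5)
Your opening observation is sound: condition (2) does forbid any locust from switching into the front gap or the back gap, and your monovariant $d(t)=dist^{c}(P_1,Q_1)$ is indeed non-increasing. But the reduction to ``a purely one-track process'' does not follow from it, and this is where the proposal breaks. You never account for locusts \emph{leaving} track $\mathcal{K}$, nor for locusts entering the \emph{interior} gaps of a segment (condition (2) is satisfiable there, since both new neighbours then share the entrant's heading; only the two boundary gaps are protected). The departures are the fatal omission: the two segment heads are precisely the locusts with $b(A)\neq b(A^{\rightarrow})$, so whenever conditions (1)--(3) hold with respect to an adjacent track they may move vertically out of $\mathcal{K}$. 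When a head departs, the head-to-head distance \emph{increases} (by up to $2$ if the locusts behind it are packed, and by much more if an interior gap lay just behind it), so your approach-phase claim that $dist^{c}(P_{|P|},Q_{|Q|})$ shrinks by $2$ per step is false, and the compaction-phase bound by the initial spans collapses because the identity of the head keeps changing. Handling departures is the actual crux of this lemma: the paper's proof introduces a second potential $F(t)$ (the number of empty cells lying strictly between consecutive same-heading maximal compact sets, plus the number of locusts on the track), shows $F$ is non-increasing and strictly decreases with every departure, so at most $F(t_0)\leq d$ locusts ever leave, each inflating the remaining work by at most $2$. That accounting is exactly where the bound $3d = d + 2d$ comes from; without it, or an equivalent, you cannot reach $3d$.

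The second gap is that you defer, rather than prove, the cost of gap re-openings after head conflicts. The paper never bounds re-openings ``summed over the run''; it sidesteps the issue by choosing as its Lyapunov function $L(t)=L_1(t)+L_2(t)+L_3(t)$, the sum of \emph{all} gaps between consecutive maximal compact sets (gaps inside $P$, gaps inside $Q$, and the front gap), so that a conflict which eliminates a lone front locust merely relabels an interior gap as the front gap, leaving $L$ unchanged, while a conflict whose loser has packed locusts behind it re-closes within the same step. With that bookkeeping, $L$ strictly decreases in every step in which no locust departs, entrants into interior gaps can only decrease the potentials, and the departure count is controlled by $F$. In short, the proof needs two nested potentials rather than two temporal phases; as written, your plan both misses the vertical departures entirely and leaves the flip accounting as an unproven assertion.
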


\begin{proof}
The track  $\mathcal{K}$ consists of locations of the form $(x,y)$ for some fixed $y$ and $1 \leq x \leq n$. For brevity, in this proof we will denote the location $(x,y)$ simply by its horizontal coordinate, i.e., $x$, by writing $(x) = (x,y)$. 

We may assume w.l.o.g. that $t_0 = 0$, and that $P_1$ is initially at $(0)$. Note that this means $Q_1$ is at $(d)$ at time $0$. If at any time $t \leq 3d$, the track is stable, then we are done, so we assume for contradiction that this is not the case. This means that $P_1$ and $Q_1$ do not change their headings before time 3d. This being the case, we get that $dist^c(P_1, Q_1)$ is non-increasing before time 3d. As the segments $P(t)$ and $Q(t)$ move towards each other at every time step $t \leq 3d$, we can consider only the interval of locations $[0,d]$, i.e., the locations $(0), (1), \ldots (d)$. We then define the distance $dist(\cdot, \cdot)$ between two locusts in this interval whose $x$-coordinates are $x_1$ and $x_2$ as $|x_1 - x_2|$.

At any time $t \leq 3d$, we may partition the locusts in $[0,d]$ into maximal compact sets of locusts. This partition is unique, by Observation \ref{maximalcompactsetobservation}. Let us label the maximal compact sets of locusts that belong to $P(t)$ as $\mathcal{C}_1^t, \mathcal{C}_2^t, \ldots \mathcal{C}_{c_t}^t$, where the segments are indexed from $1$ to $c_t$, sorted by increasing $x$ coordinates, such that $\mathcal{C}_1^t$ contains the locusts closest to $(0)$. Analogously, we label the maximal compact sets that belong to $Q(t)$ as $\mathcal{W}_1^t, \mathcal{W}_2^t, \ldots \mathcal{W}_{w_t}^t$, with indices running from $1$ to $w_t$, sorted by decreasing $x$-coordinates such that $\mathcal{W}_1^t$ contains the locusts that are closest to $(d)$ (see Figure \ref{lemmaconstructfig}). In this proof, the distance between two sets of locusts $X, Y$, denoted $dist(X,Y)$, is defined simply as the minimal distance between two locusts $A \in X, B \in Y$. Our proof will utilise the functions:

\begin{equation} 
\begin{split}
L_1(t) = \sum_{i=1}^{c_t-1} dist(\mathcal{C}_i^t,\mathcal{C}_{i+1}^t), &\ 
L_2(t) = \sum_{i=1}^{w_t-1} dist(\mathcal{W}_i^t, \mathcal{W}_{i+1}^t) \\
L_3(t) = dist(\mathcal{C}_{c_t}^t, \mathcal{W}_{w_t}^t), & \ 
L(t) = L_1(t) + L_2(t) + L_3(t) \\
\end{split}
\end{equation}

$L_1(t)$ is the sum of distances between consecutive clockwise-facing sets in the partition at time $t$. $L_2(t)$ is the sum of distances between the counterclockwise sets. $L_3(t)$ is the distance between the two closest clockwise and counterclockwise facing sets. The function $L(t)$ is the sum of distances between consecutive compact sets in the partition. When $L(t) = 1$, there are necessarily only one clockwise and one counterclockwise facing sets in the partition, which must equal $P(t)$ and $Q(t)$ respectively. Furthermore, $L(t) = 1$ implies that the distance between $P(t)$ and $Q(t)$ is $1$. Hence when $L(t) = 1$, $P(t)$ and $Q(t)$ are both in deadlock. The converse is true as well, hence $L(t) = 1$ if and only if $P(t), Q(t)$ are in deadlock. We will use $L(t)$ as a potential or ``Lyapunov'' function \cite{la2012lyapunov} and show it must decrease to $1$ within 3d time steps. By Lemma $\ref{deadlocksegmentstayslemma}$, once $P$ and $Q$ are in deadlock they will remain in deadlock until one of them is eliminated, which completes the proof.

\begin{figure}[!ht]
\centerline{\includegraphics[scale=.6]{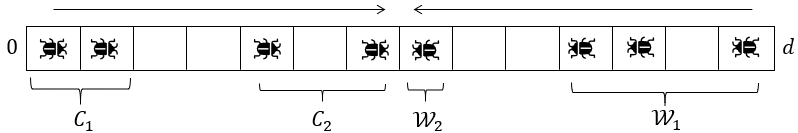}}
\caption{A partition into maximal compact subsets as in our construction. In this configuration, $L_1(t) = 3$, $L_2(t) = 3$, $L_3(t) = 1$, and $L(t) = 7$.}
\label{lemmaconstructfig}
\vspace{-6mm}
\end{figure}

Let us denote by $max(X)$ the locust with maximum $x$-coordinate in $X$, and by $min(X)$ the locust with minimal $x$-coordinate. We may also use $max(X)$ and $min(X)$ to denote the $x$ coordinate of said locust. Note that  $dist(\mathcal{C}_{i}^t,\mathcal{C}_{i+1}^t)$ is the distance between $max(\mathcal{C}_{i}^t)$ and $min(\mathcal{C}_{i+1}^t)$. 

Recall that in the locust model, every time step is divided into a phase where locusts move horizontally (on their respective tracks), and a phase where they move vertically. First, let us show that the sum of distances $L_1(t)$ does not increase due to changes in either the horizontal or vertical phase. Since $L_1(t)$ is the sum of distances between compact partition sets whose locusts move clockwise, and for all  $\mathcal{C}_{i}^t$ except perhaps $\mathcal{C}_{c_t}^t$, $max(\mathcal{C}_{i}^t)$ always moves clockwise, the distance $dist(\mathcal{C}_{i}^t,\mathcal{C}_{i+1}^t)$ does not increase as a result of locust movements (note that clockwise movements of  $max(\mathcal{C}_{i}^t)$ do not result in a new compact set because the rest of the locusts in $C_i^t$ follow it). Furthermore, since conflicts cannot result in a new maximal compact set in the partition, conflicts do not increase $L_1(t)$. Hence, $L_1(t)$ does not increase in the horizontal phase. In the vertical phase, clockwise-heading locusts entering the track either create a new set in the partition, which does not affect the sum of distances (as they then merely form a ``mid-point'' between two other maximal compact sets), or they join an existing compact set, which can never increase $L_1(t)$. By the locust model, the only locusts that can move tracks are  $max(\mathcal{C}_{c_t}^t)$ and  $min(\mathcal{W}_{w_t}^t)$, since these are the only locusts for which the condition $b(A) \neq b(A^{\rightarrow})$ is true, so locusts moving tracks cannot increase $L_1(t)$ either. In conclusion, $L_1(t)$ is non-increasing at any time step. By analogy, $L_2(t)$ is non-increasing.

Similar to $L_1$ and $L_2$, the distance $L_3(t)$ cannot increase as a result of locusts entering the track. It can increase as a result of a locust conflict which eliminates either $\mathcal{W}_{w_t}^t$ or $\mathcal{C}_{c_t}^t$, but such an increase is compensated for by a comparable decrease in either $L_1(t)$ or $L_2(t)$. It is also simple to check that, since $P(t)$ and $Q(t)$ are always moving towards each other when they are not in deadlock (i.e., when $L(t) > 1$),  there will be at least two compact sets in the partition that decrease their distance to each other, hence $L_1, L_2$ or $L_3$ must decrease by at least $1$ in the horizontal phase. 

To conclude: $L_1(t)$ and $L_2(t)$ are non-increasing. $L_3(t)$ is non-increasing during the horizontal phase and as a result of new locusts entering $\mathcal{K}$. If $L(t) > 1$, $L(t)$ decreases during each horizontal phase. Hence, $L(t)$ decreases in every time step where $L(t) > 1$ and no locusts in $\mathcal{K}$ move to another track.

What happens when locusts in $\mathcal{K}$ do move to another track? As proven, $L_1(t)$ and $L_2(t)$ do not increase. However, the distance $L_3(t)$ will increase, since the only locusts that can move tracks are $max(\mathcal{C}_{w_t}^t)$ and $min(\mathcal{W}_{c_t}^t)$. It is straightforward to check that when $\mathcal{C}_{c_t}^t$ contains more than one locust, $L_3(t)$ will increase by at most $2$ as a result of $max(\mathcal{C}_{w_t}^t)$ moving tracks. When $\mathcal{C}_{c_t}^t$ contains exactly one locust, $L_3(t)$ can increase significantly (as $L_3(t)$ then becomes the distance between $C_{c_t-1}^t$ and $W_{w_t}$), but any increase is matched by the decrease in $L_1(t)$ as a result of $\mathcal{C}_{c_t}^t$ being eliminated. Analogous statement hold for $\mathcal{W}_{w_t}^t$, and hence $L_3(t)$ can increase by at most $2$ as a result of one locust moving out of the track. We need to bound, then, the number of locusts in $\mathcal{K}$ that move tracks before time $3d$. We define the potential function $F(t)$:

\begin{equation}
\begin{split}
    F(t) =  \sum_{i=1}^{c_t-1} (dist(\mathcal{C}_i^t,\mathcal{C}_{i+1}^t) - 1) +  \sum_{i=1}^{w_t-1} (dist(\mathcal{W}_i^t, \mathcal{W}_{i+1}^t) - 1) + |P(t) \cup Q(t)| = \\
    = L_1(t) + L_2(t) - c_t - w_t + |P(t) \cup Q(t)| 
\end{split}
\end{equation}

$F(t)$ is the sum of the empty locations between consecutive compact sets in the partition whose locusts have the same heading, plus the number of locusts in $\mathcal{K}$. Note that $F(t) \geq 0$ at all times $t$. We will show $F(t)$ is non-increasing, and that it decreases whenever a locust leaves the track. Hence, at most $F(0)$ locusts can leave the track.

Let us show that $F(t)$ is non-increasing. We already know $L_1$ and $L_2$ are non-increasing. In the horizontal phase, $|P(t) \cup Q(t)|$ is of course unaffected. $c_t$ and $w_t$ can decrease as a result of maximal compact sets merging, hence increasing $F$, but this can only happen when the distance between two such sets has decreased, hence the resulting increase to $F$ is undone by a decrease in $L_1$ and $L_2$. Hence, $F(t)$ does not increase because of locusts' actions during the horizontal phase.

Likewise, locusts leaving $\mathcal{K}$ can decrease $c_t$ or $w_t$ when they cause a maximal compact set to be eliminated, but this is matched by a comparable decrease in $L_1$ or $L_2$ which means that $F$ does not increase due to locusts moving out of the track. Furthermore,  $|P(t) \cup Q(t)|$ decreases when this happens. Hence, a locust moving out of the track decreases $F(t)$ by at least $1$. Finally, let us show that locusts entering the track does not increase $F(t)$.

At time $t$, locusts can only enter the track at empty locations that are found in intervals of the form $[max(\mathcal{W}_i^t), min(\mathcal{W}_{i+1}^t)]$ or  $[max(\mathcal{C}_{i+1}^t), min(\mathcal{C}_{i}^t)]$ for some $i$. In particular, locusts cannot enter empty locations that are between two locusts belonging to the same compact set (because a locust in that set will always be attempting to move to that location in the next time step, and the model disallows vertical movements to such locations), nor can they enter the track on the empty locations between $min(\mathcal{C}_{c_t}^t)$  and $max(\mathcal{W}_{w_t}^t)$. Thus, locusts entering the track at time t decrease the amount of empty locations between two clockwise or
counterclockwise compact partition sets (and perhaps cause the sets between which they enter to merge into a single compact set). This will always decrease $L_1(t) + L_2(t) - c_t - w_t$ by at least $1$ and increase $|P(t) \cup Q(t)|$ by $1$. On net, we see that new locusts entering $\mathcal{K}$ either decreases or does not affect $F$. 

In conclusion, $F(t)$ is non-increasing, and any time a locust moves to another track, $F(t)$ decreases by $1$. Thus, at most $F(0)$ locusts can move from $\mathcal{K}$ to another track. Recall that locusts moving out of the track can increase $L(t)$ by at most $2$. Hence after at most $L(0) + 2F(0) \leq d + 2d = 3d$ time steps, $L(t) = 1$.
\qed\end{proof}

\begin{Lemma}
\label{timetodeadlockalllemma}
Let $seg(t)$ denote the set of segments in all tracks at time $t$. At time $t+3n$, either every segment is in deadlock with some other segment, or $|seg(t+3n)| < |seg(t)|$.
\end{Lemma}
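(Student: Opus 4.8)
The plan is to argue by a dichotomy on the total number of segments. Since the number of segments on each individual track is non-increasing (and hence $|seg(t)|$ is non-increasing), over the window $[t,t+3n]$ either $|seg(t+3n)| < |seg(t)|$, which is exactly the second alternative, or $|seg|$ is constant throughout the window, in which case the per-track count is constant on every track. First I would dispose of stable tracks: by Definition \ref{segmentdefinition} a track all of whose locusts share a heading contains no segment tail and therefore contributes nothing to $seg(t)$, so it suffices to show that on every non-stable track all segments are pairwise in deadlock at time $t+3n$.

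Next I would exploit the cyclic structure of a non-stable track $\mathcal{K}$. Its segments alternate in heading around the ring, so there are $2r$ of them for some $r \geq 1$, giving $r$ ``conflict boundaries'' where a clockwise and a counterclockwise head are adjacent and approaching, interleaved with $r$ ``gap boundaries'' where two tails sit back-to-back and recede from each other. The key structural claim is that each arc lying between two consecutive gap boundaries --- which contains exactly one clockwise segment $P$ and one counterclockwise segment $Q$ with their heads meeting --- evolves, as far as the track $\mathcal{K}$ is concerned, independently of the rest of the ring. Indeed, a locust can only leave its arc by a vertical move off $\mathcal{K}$; horizontal moves keep it inside its segment, and a lost conflict transfers it to the neighbouring same-arc segment, so no locust crosses a gap boundary, while the gaps themselves only widen. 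Because the total count is constant, no segment is ever eliminated on $\mathcal{K}$ during the window, so these arcs persist and stay decoupled.

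Within a single such arc the situation is exactly the two-segment configuration analysed in Lemma \ref{deadlockinoneringlemma}: locusts entering $\mathcal{K}$ from other tracks must join $P$ or $Q$ (their new neighbours share their heading), departures are controlled by the same $F$-potential, and the $L$-potential on the arc strictly decreases while $P$ and $Q$ are not yet in deadlock. I would therefore invoke Lemma \ref{deadlockinoneringlemma} on each arc to conclude that the pair reaches deadlock within $3 d_i$ steps, where $d_i = dist^{c}(P_1, Q_1)$ is the span of the arc and $d_i \leq n$. Once a pair is in deadlock it remains so by Lemma \ref{deadlocksegmentstayslemma} (it could only leave deadlock by all its locusts agreeing, i.e. by an elimination, which the constant count forbids). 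Hence by time $t+3n \geq t+3d_i$ every conflict pair on every non-stable track is in deadlock, which gives the first alternative.

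The hard part will be justifying rigorously that the proof of Lemma \ref{deadlockinoneringlemma} localises to a single arc in the presence of the remaining segments --- that is, that the gap boundaries genuinely isolate the arc and that the compactness-based bookkeeping (maximal compact sets, Observation \ref{maximalcompactsetobservation}, and the non-increase of $L$ and $F$) goes through verbatim when the arc is only a sub-interval of $\mathcal{K}$ rather than the whole track. This amounts to rechecking that the only ways $L$ and $F$ can change within the arc are those already accounted for in Lemma \ref{deadlockinoneringlemma}, using the model's prohibition on vertical moves into slots a locust is about to occupy to rule out spurious insertions between compact sets, together with the observation that constant segment count prevents the arc mergers that would otherwise couple neighbouring regions.
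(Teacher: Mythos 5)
Your proposal is correct and takes essentially the same route as the paper's own proof: the paper likewise assumes the segment count is constant, argues that each segment $P$ can only interact with the single opposing segment $Q$ it marches toward (asserting the decoupling ``w.l.o.g.'' where you spell it out via gap boundaries and arcs), and then applies Lemma~\ref{deadlockinoneringlemma} followed by Lemma~\ref{deadlocksegmentstayslemma} to conclude the pair is still in deadlock at time $t+3n$. The localization step you flag as the hard part is precisely the point the paper treats informally, so your added care there is a refinement rather than a different argument.
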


\begin{proof}
Consider some track $\mathcal{K}$ and a segment $P$ which is in that track at time $t$. Let us assume that $|seg(t+3n)| = |seg(t)|$, and show that $P(t+3n)$ must be in deadlock with another segment. At any time $t' \geq t$, as long as the number of segments on $\mathcal{K}$ does not decrease, the locusts of $P(t')$ will be marching towards locusts of another segment, which we will label $Q(t')$. They cannot collide or conflict with locusts belonging to any segment other than $Q(t')$. Hence, other segments in $\mathcal{K}$ do not affect the evolution of $P(t)$ and $Q(t)$ before time $t+3n$, and we can assume w.l.o.g. that $P(t)$ and $Q(t)$ are the only segments in $\mathcal{K}$ at time $t$. Let $d$ be as in the statement of Lemma \ref{deadlockinoneringlemma}. Since $n \geq d$, Lemma \ref{deadlockinoneringlemma} tells us that at some time $t \leq t^* \leq t+3n$, $P(t^*)$ and $Q(t^*)$ must be in deadlock. Since by Lemma \ref{deadlocksegmentstayslemma}, $P$ and $Q$ must remain in deadlock until one of them is eliminated, we see that at time $t+3n$ they must still be in deadlock, since we assumed $|seg(t)| = |seg(t+3n)|$.
\qed\end{proof}

\begin{Theorem}
$\mathbb{E}[T_{stable}] \leq  \frac{3}{4}mn + \frac{\pi^2}{24}m^2 = \mathcal{O}(mn + m^2)$
\label{mnm2bound}
\end{Theorem}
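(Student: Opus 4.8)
The plan is to formalize the two-phase picture sketched in the introduction and charge time separately to a ``formation'' phase and a ``resolution'' phase. Call the system \emph{globally deadlocked} at time $t$ if every segment is in deadlock with some other segment, as in Lemma~\ref{timetodeadlockalllemma}. The key structural observation is that while globally deadlocked, no locust can switch tracks: a track switch requires condition~(1), that a locust and its front neighbour have opposite headings and are \emph{not} adjacent, whereas in a global deadlock every segment head is adjacent to the opposing head it is deadlocked with. Hence during a globally deadlocked stretch the $k$ tracks evolve as independent closed systems, and by Lemma~\ref{deadlocksegmentstayslemma} each deadlocked pair stays a deadlocked pair until one of its two segments is eliminated. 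I would give segments persistent identities using Definition~\ref{segmentidentitydefinition}, and recall that $|seg(\cdot)|$ starts at most $m$ and can only decrease, terminating at $k$ once all tracks are stable.

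First I would bound the total time spent \emph{outside} global deadlock, aiming for the $\tfrac34 mn$ term. The tool is Lemma~\ref{timetodeadlockalllemma}: in any window of $3n$ consecutive steps, either $|seg(\cdot)|$ strictly decreases or the system becomes globally deadlocked. For the sharp constant I would instead amortize using Lemma~\ref{deadlockinoneringlemma}, which reforms a deadlock between two segments at clockwise distance $d$ in $3d$ steps. Each formation episode follows an elimination, after which a winning segment must march across a gap of some length $d$ to meet its new like-heading neighbour. Charging $3d$ steps to the gap of length $d$ that is closed, and amortizing these charges against the empty space $n-M_{\mathcal K}$ available on each track $\mathcal K$ (together with the fact that an episode both eliminates a segment and merges its winner with a neighbour), the total charged distance is $O(mn)$, and careful bookkeeping brings the constant to $\tfrac34$, i.e.\ $\tfrac34 mn$ formation time in all.

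Next I would bound the time spent \emph{inside} global deadlock, aiming for the $\tfrac{\pi^2}{24}m^2$ term. Here the two heads of a deadlocked pair of sizes $p$ and $q$ conflict every step and the loser joins the winner, so the pair's sizes perform a symmetric gambler's ruin with absorbing barriers $0$ and $p+q$ and the pair resolves in expected time exactly $pq$ (cf.\ \cite{epstein2012gamblersruin}). To aggregate this across pairs and across time I would track the potential $\Phi(t)=\sum_{S}|S(t)|^2$ summed over all live segments: a single boundary conflict shifts two adjacent sizes by $\pm1$ and raises $\Phi$ by $2$ in expectation, merges only raise it further, and $\Phi$ terminates at $\sum_{\mathcal K}M_{\mathcal K}^2\le m^2$; hence the expected number of conflicts is at most $\tfrac12 m^2$, which already gives the $O(m^2)$ order. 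Sharpening the constant to $\tfrac{\pi^2}{24}$ exploits parallelism: at each deadlocked step every pair contributes a conflict simultaneously, so the number of \emph{steps} is far below the number of \emph{conflicts} whenever many pairs coexist. Organizing the resolutions into generations and summing the dominant gambler's-ruin durations produces the series $\tfrac{m^2}{4}\sum_{i\ge1} i^{-2}=\tfrac{\pi^2}{24}m^2$, using the identity $\sum_{i\ge1} i^{-2}=\pi^2/6$.

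The main obstacle is making this resolution bound tight enough to reach $\tfrac{\pi^2}{24}m^2$ rather than the crude $\tfrac12 m^2$ coming from counting conflicts alone: one must correctly combine the parallel per-pair conflicts with the potential $\Phi$, identify the worst-case configuration, and justify that the per-generation gambler's-ruin durations sum to the $\zeta(2)$ series without double-counting. A secondary difficulty lies in the formation amortization, since the tail-to-tail gaps between deadlocked pairs grow during resolution and must be reclosed afterwards, so the $3d$ charges have to be matched against a monotone quantity (the remaining empty space weighted by the surviving segments) in order to telescope to the claimed $\tfrac34 mn$. Once both phases are controlled, adding the two contributions yields $\mathbb{E}[T_{stable}]\le \tfrac34 mn+\tfrac{\pi^2}{24}m^2$, proving Theorem~\ref{mnm2bound}.
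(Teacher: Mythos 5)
Your skeleton is the same as the paper's (a formation phase controlled by Lemma~\ref{timetodeadlockalllemma}, and a resolution phase in which deadlocked pairs play a gambler's ruin), and your structural observation that no locust can switch tracks during a global deadlock is correct. But the proposal has a genuine gap: the step you defer as ``the main obstacle'' is precisely the content of the proof, and it is not recoverable from your potential $\Phi(t)=\sum_S |S(t)|^2$. That potential bounds the expected number of \emph{conflicts} by $\tfrac12 m^2$, and no amount of ``organizing into generations'' converts a conflict count into the $\zeta(2)$ series, because the conversion requires knowing \emph{which} pair's resolution ends each stage. The paper's argument is a stage decomposition plus pigeonhole: write $T_{stable}=T_2+T_4+\cdots+T_{|seg(0)|}$, where $T_{2i}$ is the time for the segment count to drop below $2i$. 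When $2i$ segments exist, after at most $3n$ steps either the count drops or there are $i$ deadlocked pairs; by pigeonhole some pair $P,Q$ contains at most $m/i$ locusts, so by AM--GM $|P|\cdot|Q|\le (m/(2i))^2$, and since that pair conflicts every step until one side is eliminated, its expected resolution time is exactly $|P|\cdot|Q|$. Hence $\mathbb{E}[T_{2i}]\le 3n+(m/(2i))^2$, and summing over $i$ gives $\tfrac{m^2}{4}\cdot\tfrac{\pi^2}{6}=\tfrac{\pi^2}{24}m^2$ for the resolution contribution --- no worst-case configuration analysis and no double-counting issues arise, because only the per-stage minimal pair is ever invoked. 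A secondary flaw in your $\Phi$ argument itself: outside global deadlock a track switch moves a locust from a segment of size $p$ to one of size $q$ and changes $\Phi$ by $2(q-p)+2$, which can be negative, so ``merges only raise it further'' is not the complete accounting.

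Your treatment of the $mn$ term is also unsubstantiated: ``careful bookkeeping brings the constant to $\tfrac34$'' is asserted, not argued, and the amortization against reclaimed empty space is never matched to a monotone quantity. Note that the paper's own proof does not achieve $\tfrac34$ either --- it charges at most $3n$ formation steps to each of at most $|seg(0)|/2\le m/2$ elimination stages, yielding $\tfrac32 mn$ (the constant in the theorem statement and in the proof disagree in the paper itself). So chasing $\tfrac34$ is misplaced effort; the content of the theorem is the bound $\mathcal{O}(mn+m^2)$, and the direct stage-wise argument above delivers it.
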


\begin{proof}
Let $|seg(t)|$ denote the number of segments at time $t$. $\mathbb{E}[T_{stable}]$ can be computed as the sum of times $\mathbb{E}[T_2 + T_4 + \ldots + T_{|seg(0)|}]$, where $T_i$ is the expected time until the number of segments drops below $i$, if it is currently $i$ (we increment the index by $2$ since segments are necessarily eliminated in pairs). 

Let us estimate $E[T_{2i}]$. Suppose that at time $t$, the number of segments is $2i$. Then after $3n$ steps at most, either the number of segments has decreased, or all segments are in deadlock. There are in total $i$ pairs of segments in deadlock, and as there are $m$ locusts, there must be a pair $P(t+3n), Q(t+3n)$ that contains at most $m/i$ locusts. By Lemma \ref{deadlocksegmentstayslemma}, $P(t+3n), Q(t+3n)$ remain in deadlock until either $P$ or $Q$ is eliminated. We can compute precisely how long this takes, since at every time step after time $t+3n$, the heads of $P$ and $Q$ conflict, resulting in one of the segments increasing in size and the other decreasing. Hence, the expected time it takes $P$ or $Q$ to be eliminated is precisely the expected time it takes a symmetric random walk starting at $0$ to reach either $|P(t+3n)|$ or $-|Q(t+3n)|$, which is $|P(t+3n)| \cdot |Q(t+3n)| \leq (\frac{m}{2i})^2$. Hence, $E[T_{2i}] \leq 3n +  (\frac{m}{2i})^2$. Consequently:

\begin{equation}
     \mathbb{E}[T_2 + T_4 + \ldots + T_{|seg(0)|}] \leq 3n \cdot \frac{|seg(0)|}{2} + \sum_{i=1}^{\infty}  (\frac{m}{2i})^2 \leq \frac{3}{2}mn + \frac{\pi^2}{24}m^2
\end{equation}

Where we used the inequality $|seg(0)| \leq m$ and the identity $\sum_{i=1}^{\infty}  (\frac{1}{i})^2 = \frac{\pi^2}{6}$.
\qed\end{proof}

We prove next that $E[T_{stable}]=\mathcal{O}(\log (k)n^2)$. For this, we require the following result:

\begin{Lemma}
Consider $k$ independent random walks with absorbing barriers at $0$ and $2n$, i.e., random walks that end once they reach $0$ or $2n$. The expected time until \textbf{all} $k$ walks end is $\mathcal{O}(n^2 \log (k))$.
\label{randomwalklimitlemmaksqrt}
\end{Lemma}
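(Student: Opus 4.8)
The plan is to reduce the claim to bounding the maximum of $k$ independent absorption times through a standard tail-bound-and-union-bound argument. First I would control a single walk. By the classical gambler's ruin formula, a symmetric simple random walk started at an interior position $p$ with absorbing barriers at $0$ and $2n$ is absorbed in expected time $p(2n-p)$, and since $p(2n-p) \le n^2$ for every $p \in \{1, \ldots, 2n-1\}$, this bound of $n^2$ holds \emph{uniformly} over all interior starting positions. Writing $T$ for the absorption time of one walk, Markov's inequality then yields $\Pr[T > 2n^2] \le \tfrac{1}{2}$.

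Next I would bootstrap this into a geometric tail bound. The essential point is that the uniform estimate $\mathbb{E}[\,\text{remaining time}\mid\text{current state}\,] \le n^2$ lets me ``restart'' the argument block by block: conditioned on the walk not having been absorbed after $2n^2 j$ steps, it occupies some interior state, and from that state Markov's inequality again bounds by $\tfrac{1}{2}$ the probability of surviving the next block of $2n^2$ steps. Iterating over blocks gives $\Pr[T > 2n^2 j] \le 2^{-j}$ for every integer $j \ge 0$.

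A union bound over the $k$ independent walks then gives $\Pr[\max_i T_i > 2n^2 j] \le k\,2^{-j}$, and I would finish by integrating the tail:
\[
\mathbb{E}\Big[\max_i T_i\Big] \;\le\; 2n^2 \sum_{j \ge 0} \min\!\big(1,\, k\,2^{-j}\big) \;\le\; 2n^2\big(\log_2 k + 3\big),
\]
where the terms with $j \le \log_2 k$ contribute at most $\log_2 k + 1$ and the remaining geometric tail contributes at most $2$. This establishes $\mathbb{E}[\max_i T_i] = \mathcal{O}(n^2 \log k)$, which is exactly the claim.

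I expect the main obstacle to be making the restart step precise: specifically, justifying that the per-block survival probability stays bounded by $\tfrac{1}{2}$ no matter which (random) interior position the walk occupies at the start of each block. This follows cleanly from the uniformity of the gambler's ruin bound $p(2n-p)\le n^2$ in $p$, so that Markov's inequality applied from the current state controls the next block regardless of the past; once this uniformity is in hand the remaining summation is routine.
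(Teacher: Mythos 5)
Your proof is correct, but it follows a genuinely different route from the paper's. The paper bounds the tail of a single walk spectrally: it uses the second-largest eigenvalue $\lambda = \cos(\pi/2n)$ of the transition matrix to get $\Pr[\mathcal{T}_1 \geq t] = \mathcal{O}(\lambda^t)$, then exploits the independence of the $k$ walks to write $\Pr[\mathcal{T}_k \geq t] = 1 - \Pr[\mathcal{T}_1 < t]^k$, and finally integrates this tail exactly, extracting the $\log k$ factor via the identity $\sum_{j=1}^{k}\binom{k}{j}\frac{(-1)^j}{j} = -H_k$. You instead obtain the exponential tail elementarily: the gambler's-ruin expectation $p(2n-p)\leq n^2$ is uniform in the starting point, so Markov's inequality plus the Markov property gives the block-restart bound $\Pr[T > 2n^2 j]\leq 2^{-j}$, after which a union bound over the $k$ walks and a tail sum finish the job. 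Your argument buys three things: it avoids spectral machinery and special combinatorial identities entirely; it yields an explicit constant, $\mathbb{E}[\max_i T_i]\leq 2n^2(\log_2 k + 3)$; and, notably, it never uses independence of the $k$ walks (the union bound holds for arbitrarily dependent walks), so it proves a slightly stronger statement than the lemma as stated. The paper's approach, in exchange, gives sharper information about a single walk's tail (the true exponential decay rate $\cos(\pi/2n)^t$) and, with independence, a computation of $\mathbb{E}[\mathcal{T}_k]$ that is essentially tight rather than only an upper bound. The one step of yours that needed care --- that the per-block survival probability is at most $\tfrac12$ regardless of the (random) state at the start of the block --- is exactly the point you flagged, and your resolution via the uniformity of the gambler's-ruin bound is sound.
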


\begin{proof}
First, let us set $k=1$ and estimate the probability that the one walk has not ended by time $t$. Let $P$ be the transition probability matrix of the random walk, and let $\textbf{v}$ be the vector describing the initial probability distribution of the location of the random walker. Then $\textbf{v} P^t$ is the probability distribution of its location after $t$ time steps \cite{markovref}. The evolution of $\textbf{v} P^t$ is well-studied and relates to ``the discrete heat equation'' \cite{lawler2010randomwalkheatequation}. The probability that the walk has not ended at time $t$ is the sum $\sum_{i=1}^{2n-1} \textbf{v}(i)$.   Asymptotically, this sum is bounded by  $\mathcal{O}(\lambda^t)$ where $\lambda = cos(\frac{\pi}{2n})$ is the $2$nd largest eigenvalue of $P$ (cf. \cite{lawler2010randomwalkheatequation}). 

Returning to general $k$, let $T_k$ be a random variable denoting the time when all $k$ walks end. By looking at the series expansion of $cos(1/x)$, we may verify that for  $n > 1$, $cos(\frac{\pi}{2n}) < 1 - \frac{1}{n^2}$. From the previous paragraph, and because the walks are independent, we therefore see that 

\begin{equation}
        Pr(\mathcal{T}_k \geq t)  = 1 - Pr(\mathcal{T}_1 < t)^k = 1 - \big(1 - \mathcal{O}( \lambda^t)\big)^k = 1 - \big(1 - \mathcal{O}((1 - \frac{1}{n^2})^t)\big)^k 
\end{equation}

Consequently, for $t \gg n^2$, the following asymptotics hold for some constant $C$: 

\begin{equation}
    Pr(\mathcal{T}_k \geq t) < 1 - (1 - Ce^{-t/n^2})^k   
\end{equation}

Where we used the fact that $(1+x/n)^n \to e^x$ as $n \to \infty$. Note that $Pr(\mathcal{T}_k \geq t+n^2\log(C)) < 1 - (1 - e^{-t/n^2})^k$. Hence:

\begin{equation}
    \begin{split}
        \mathbb{E}[\mathcal{T}_k] = \int_{0}^{\infty} Pr(T_k > t) dt   \leq n^2\log(C) + \int_{0}^{\infty} 1 - (1 - e^{-t/n^2})^k  dt = \\ = n^2\log(C)  + \int_{0}^{\infty} 1 - \sum_{j=0}^k \binom{k}{j} (-1)^j  e^{-tj/n^2} dt = n^2\log(C)  + - \sum_{j=1}^k \binom{k}{j} (-1)^j  \int_{0}^{\infty} e^{-tj/n^2} dt = \\ =n^2\log(C)  + -n^2 \sum_{j=1}^k \binom{k}{j}  \frac{(-1)^j }{j} = \mathcal{O}(n^2\log (k)) 
    \end{split}
\end{equation}

Where we used the equality $\sum_{j=1}^k \binom{k}{j}  \frac{(-1)^j }{j} = -H_k$,  $H_k$ being the $k$th harmonic number.\qed\end{proof}

\begin{Theorem}
$\mathbb{E}[T_{stable}] =  \mathcal{O}(\log (k) \cdot  n^2)$
\label{kn2bound}
\end{Theorem}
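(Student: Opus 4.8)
The plan is to connect the dynamics of the locust segments to the $k$ independent random walks analyzed in Lemma \ref{randomwalklimitlemmaksqrt}, exploiting the fact that $n^2 \log(k)$ is precisely the bound it provides. The key structural insight from Lemma \ref{timetodeadlockalllemma} is that within every window of $3n$ time steps, either the total number of segments strictly decreases, or \emph{all} segments are in deadlock with a partner. Since segments are eliminated in pairs and there are at most $m \leq kn$ segments initially, the number of ``decrease'' windows is bounded; the dominant contribution to $T_{stable}$ comes from the deadlock phases, during which every track hosts a pair of adversarial segments whose relative size evolves by a symmetric random walk (as established in the proof of Theorem \ref{mnm2bound}).

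First I would fix attention on a single track $\mathcal{K}$ and argue that, once it enters a persistent deadlock, the difference $|P(t)| - |Q(t)|$ between the clockwise and counterclockwise segment sizes performs a symmetric random walk, because at each step the two segment heads conflict and the loser (chosen uniformly) transfers one locust to the winner. The track stabilizes exactly when this difference first reaches $\pm(|P|+|Q|)$, i.e., when one segment is eliminated. Crucially, the total number of locusts on a track is at most $2n$ (length $n$, each location holding one locust, across the two segments the relevant span is bounded by $n$), so the size-difference walk lives on an interval of width at most $2n$ and is absorbed at the boundaries $0$ and $2n$ after re-centering. This is exactly the setup of Lemma \ref{randomwalklimitlemmaksqrt} with barrier $2n$.

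Next I would couple the $k$ tracks' stabilization times to $k$ \emph{independent} random walks with absorbing barriers at $0$ and $2n$, and invoke Lemma \ref{randomwalklimitlemmaksqrt} to conclude that the expected time until all $k$ tracks have stabilized is $\mathcal{O}(n^2 \log(k))$. The subtlety is that the tracks are not genuinely independent and segments can migrate between tracks, so a clean coupling is needed: I would argue that because segment counts are non-increasing and deadlocked compact sets cannot be entered by outside locusts (as noted following the definition of compactness and in Lemma \ref{deadlocksegmentstayslemma}), once all tracks are simultaneously in deadlock the per-track size-difference walks proceed without cross-track interference, so each track's absorption time is stochastically dominated by an independent $2n$-barrier walk. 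Finally I would absorb the $\mathcal{O}(mn) = \mathcal{O}(kn^2)$ overhead from the ``decrease'' windows and the at-most-$3n$ waits to reach deadlock into the $\mathcal{O}(n^2\log k)$ term, or more carefully note that the number of such windows is $\mathcal{O}(k)$ (segments per track bounded, pairs eliminated) so the overhead is $\mathcal{O}(kn) = o(n^2 \log k)$ when $k$ is large, which is the regime where this bound is the relevant one.

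The main obstacle will be rigorously justifying the reduction to \emph{independent} absorbing walks despite inter-track locust migration. Before global deadlock is reached, a locust leaving one track alters the segment populations on its destination track, inducing correlations and non-stationary drift in the size-difference processes; the clean symmetric-walk picture only holds once the system is in a global deadlock where no further track-switching occurs (since deadlocked compact sets are impenetrable and their heads are already adjacent, so no locust satisfies the track-switching precondition $b(A)\neq b(A^{\rightarrow})$ with a non-adjacent front). I would therefore need to carefully separate a bounded ``settling'' phase—during which segments merge and the system reaches a configuration where every track is in a frozen deadlock—from the ``walk'' phase to which Lemma \ref{randomwalklimitlemmaksqrt} applies, and to argue via a coupling or stochastic-domination argument that the settling phase contributes only lower-order time relative to $n^2\log k$.
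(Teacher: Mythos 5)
Your proposal assembles the right ingredients (Lemma \ref{timetodeadlockalllemma}, the deadlock persistence of Lemma \ref{deadlocksegmentstayslemma}, and Lemma \ref{randomwalklimitlemmaksqrt}), but it rests on a simplification that breaks the argument: you model each track as hosting a \emph{single} adversarial pair $P,Q$, so that the track stabilizes at the first absorption of one size-difference walk with barriers at $0$ and $2n$. In fact a track of length $n$ can contain up to $n$ segments (alternating in heading), and when Lemma \ref{timetodeadlockalllemma} forces deadlock, each track contains many deadlocked \emph{pairs}. The absorption of one walk eliminates one segment, merges its two neighbours, and reduces that track's segment count only by $2$; the track then re-enters a settling phase (up to $3n$ steps) and a fresh deadlock with fewer segments. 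So stabilizing a track is a \emph{sequence} of walk absorptions, not one, and your two-phase picture (``settling, then walk'') does not capture this alternation. Repairing it naively --- one application of Lemma \ref{randomwalklimitlemmaksqrt} with barrier $2n$ per round, and up to $\Theta(n)$ rounds per track --- yields only $\mathcal{O}(n^3 \log k)$. The paper's proof supplies exactly the idea you are missing: track $\mathcal{M}_t = \max_i seg_i(t)$, the maximum segment count over tracks, and observe by pigeonhole that a fully deadlocked track with $\mathcal{M}_t$ segments contains a deadlocked pair with at most $2n/\mathcal{M}_t$ locusts, whose absorption time is $\mathcal{O}\big((2n/\mathcal{M}_t)^2\big)$; requiring \emph{all} (at most $k$) maximal tracks to lose a pair costs the $\log k$ factor via Lemma \ref{randomwalklimitlemmaksqrt}, and summing $3n + c\log(k)(2n/\mathcal{M}_t)^2$ over $\mathcal{M}_t = n, n-2, \ldots, 2$ gives $\tfrac{3}{2}n^2 + \mathcal{O}\big(\log(k)\, n^2 \sum_i 1/i^2\big) = \mathcal{O}(\log(k)\, n^2)$. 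Without the $1/\mathcal{M}_t^2$ scaling there is no convergent series and the target bound is out of reach.

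A second, related error is your overhead accounting. You claim the number of ``decrease'' windows is $\mathcal{O}(k)$, hence overhead $\mathcal{O}(kn)$, and that this is $o(n^2\log k)$ for large $k$. Counting by the \emph{total} number of segments, eliminations number up to $m/2 = \Theta(kn)$ in the worst case, so $3n$-step windows would cost $\Theta(kn^2)$; and even $\mathcal{O}(kn)$ is not $o(n^2\log k)$ when $k$ is large relative to $n$ --- precisely the regime where this theorem is the operative bound. The paper avoids this by charging the $3n$ waits to decreases of the maximum $\mathcal{M}_t$ (at most $n/2$ of them, since $\mathcal{M}_0 \le n$ and it drops in steps of at least $2$), not to individual eliminations; simultaneous eliminations across tracks are absorbed into a single round. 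Your worry about cross-track independence, by contrast, is not a real obstacle: once pairs are deadlocked they conflict at every step with independent tie-breaks, Lemma \ref{deadlocksegmentstayslemma} keeps them deadlocked until elimination, and compact sets cannot be entered from other tracks, which is exactly how the paper justifies invoking Lemma \ref{randomwalklimitlemmaksqrt}.
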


\begin{proof}
Let $seg_i(t)$ denote the number of segments in track $i$ at time $t$, and define $\mathcal{M}_t = \max_{1\leq i \leq k} seg_i(t)$. Let us bound the expected time it takes for $\mathcal{M}_t$ to decrease. Define the set $K(t)$ to be all tracks that have  $|\mathcal{M}_t|$ segments at time $t$. Then $\mathcal{M}_t$ decreases at the first time $t' > t$ when all tracks in $K(t)$ have had their number of segments decrease. We may bound this with the following argument: slightly generalizing Lemma \ref{timetodeadlockalllemma} to hold for subsets of tracks\footnote{Lemma \ref{timetodeadlockalllemma} holds not just for the set $seg(t)$ but for the segments in a given subset of tracks, with the proof being virtually identical. Here we apply the Lemma to the subset $K(t+3n)$.} , if $\mathcal{M}_t$ doesn't decrease after $3n$ time steps (i.e., $\mathcal{M}_t=\mathcal{M}_{t+3n}$), all tracks in $K(t+3n)$ now have all their segments in deadlock. The number of deadlocked segment pairs at every track in $K(t+3n)$ is $\mathcal{M}_t/2$, so in every such track there is such a pair with at most $2n/\mathcal{M}_t$ locusts. By Lemma \ref{randomwalklimitlemmaksqrt}, using similar reasoning to Theorem \ref{mnm2bound}, these pairs of deadlocked segments resolve into a single segment after at most $c \cdot \log (k)\big(\frac{2n}{M_t}\big)^2$ expected time for some constant $c$. Hence, the number of expected time steps for $\mathcal{M}_t$ to decrease is bounded above by $3n + c\log (k)\big(\frac{2n}{M_t}\big)^2$.

$T_{stable}$ is the first time when $\mathcal{M}_t = 0$. Let us assume $n$ is even for simplicity (the computation will hold regardless, up to rounding). We have that $\mathcal{M}_0 \leq n$, and $\mathcal{M}_t$ decreases in leaps of $2$ or more (since segments can only be eliminated in pairs). Hence, $T_{stable}$ is bounded by the amount of time it takes $\mathcal{M}_t$ to decrease at most $n/2$ times. By linearity of expectation and the previous paragraph, this can be bounded by summing $3n + c\log (k)\big(\frac{2n}{M_t}\big)^2$ over $M_t = n, n-2, n-4, \ldots 2$:

\begin{equation}
    \begin{split}
        \mathbb{E}[T_{stable}] \leq  \frac{n}{2} \cdot 3n +  c\log (k)\big(\frac{2n}{n}\big)^2 + c\log (k)\big(\frac{2n}{n-2}\big)^2 + \ldots + c\log (k)\big(\frac{2n}{2}\big)^2 \leq \\
        \leq \frac{3}{2}n^2 + 4c\log (k)n^2 \sum_{i=1}^{\infty}  (\frac{1}{2i})^2 = \frac{3}{2}n^2 + \frac{\pi^2}{6}c \log (k)n^2 = \mathcal{O}(\log (k)n^2)
    \end{split}
\end{equation} As claimed. \qed\end{proof}

The proof of Theorem \ref{klargebounds} follows immediately from Theorems \ref{mnm2bound} and \ref{kn2bound}, by taking the minimum. \qed

\subsubsection{Erratic track switching and global consensus}

Theorem \ref{klargebounds} shows that, after finite expected time, all locusts on a track have identical heading. This is a stable \textit{local} consensus, in the sense that two different tracks may have locusts marching in opposite directions forever. We might ask what modifications to the model would force a \textit{global} consensus, i.e., make it so that stabilization occurs only when all locusts across \textit{all} tracks have identical heading. There is in fact a simple change that would force this to occur: let us assume that at time step $t$ any locust has some probability of acting ``eratically'' in either the vertical or horizontal phases:

\begin{enumerate}
    \item With probability $r$, a locust might behave erratically in the horizontal phase, staying in place instead of attempting to move according to its heading.
    \item With probability $p$, a locust may behave erratically in the vertical phase, meaning that even if the vertical movement  conditions (1)-(3) of the model (see Section \ref{modelsection}) are not fulfilled, the locust attempts to move vertically to an adjacent empty space on the track above or below them (if such empty space exists).
\end{enumerate}

These behaviours are independent, and so a locust may behave erratically in both the vertical and horizontal phases, in just one of them, or in neither.

The next theorem shows that the existence of erratic behaviour forces a global consensus of locust headings. The goal is to prove that there is some finite time after which all locusts must have the same heading. Note that the bound we find for this time is crude, and is not intended to approximate $T_{stable}$. We study the question of how $p$ affects $T_{stable}$ empirically in the next section.

\begin{Theorem}
Assuming there is at least one empty space (i.e., $m < nk$), and the probability of erratic track switching is $0 < r,p < 1$, the locusts all have identical heading in finite expected time.
\end{Theorem}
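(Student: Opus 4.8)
The plan is to model the full system as a time-homogeneous finite Markov chain whose states record the positions and headings of all $m$ locusts, and to argue via standard absorption theory. First I would observe that a global-consensus configuration (all locusts sharing one heading) is absorbing: a conflict requires two opposite headings to be adjacent, so once all headings agree no conflict can ever occur, and neither horizontal steps, ordinary vertical switches, nor erratic moves alter a heading. Since the state space is finite, it suffices to prove that from \emph{every} configuration there is a positive-probability path of bounded length reaching the global-consensus set: a uniform bound $N$ on the length and a uniform lower bound $\varepsilon > 0$ on the probability then stochastically dominate the hitting time by $N\cdot\mathrm{Geometric}(\varepsilon)$, which has finite expectation. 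This mirrors the Markov-chain remark made after Theorem~\ref{k1bounds}, now carried out for $k>1$, with the erratic transitions supplying the reachability that the idealized dynamics lacked.

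The engine of reachability is the mobility of the empty locations. Because $m < nk$, there is at least one empty location, and the number of empty locations is conserved for all time. I would treat each empty location as a ``hole'' that swaps with an adjacent locust whenever that locust steps into it. The key combinatorial lemma to establish is that, using the erratic rules, a hole can be transported to any target location along a positive-probability sequence of steps: an erratic horizontal stall (probability $r$) lets me freeze every locust I do not wish to move, an ordinary horizontal step lets a hole drift against the heading of a locust behind it, and an erratic vertical move (probability $p$) lets a hole pass between tracks by having a locust above or below it jump into it. Composing such moves, and using the stalls to suppress interference, I can realise any chosen sequence of single-hole relocations, so the configuration graph on positions is connected by positive-probability transitions.

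With hole mobility in hand, I would prove the inductive step: whenever both headings are present, there is a bounded positive-probability path that strictly decreases the number of minority-heading locusts. Fix a counterclockwise locust $C$ and a clockwise locust $D$. Using erratic vertical moves, walk a hole directly above or below $C$ and push $C$ onto the track containing $D$; then, using hole relocations together with horizontal stalls, maneuver the pair into the conflict position, with $D$ at $(x,y)$ and $C$ at $(x+1,y)$ on a common track $y$. At the next step this pair conflicts, and with probability $\tfrac12$ the clockwise locust wins, flipping $C$'s heading and reducing the counterclockwise count by one. Iterating this reduction at most $m$ times drives one of the two headings to extinction, reaching global consensus; the whole path has positive probability and bounded length, which is exactly what the first paragraph requires.

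The main obstacle is the hole-mobility lemma together with the adjacency construction of the third paragraph: I must verify carefully that the erratic moves suffice to relocate holes past the ``inertial'' horizontal motion of uninvolved locusts (here the stalling probability $r>0$ is essential) and that two chosen opposite-heading locusts can always be brought into the precise conflicting adjacency regardless of the starting layout and of which tracks happen to be full. Everything else is routine: I only ever need to exhibit one favourable sequence of events of positive probability from each state, never to control the typical dynamics, so the argument never has to reckon with the genuinely complicated generic behaviour of the swarm.
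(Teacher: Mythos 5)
Your outer skeleton is the same as the paper's: global consensus is absorbing (headings change only through conflicts, and conflicts need adjacent opposite headings), so it suffices to exhibit, from every non-consensus configuration, a positive-probability event of bounded length that makes progress, and then conclude finite expected hitting time by a geometric-domination argument. The gap is in how you build that event. Your two workhorses are (i) a hole-mobility lemma ("a hole can be transported to any target location"), and (ii) erratic horizontal stalls used to "freeze every locust I do not wish to move" while you steer a chosen pair $C,D$ into conflict. Device (ii) does not do what you need, for two reasons. First, stalls only suppress horizontal motion, but conflicts are triggered by \emph{adjacency}, not by movement: any opposite-heading pair that happens to be adjacent will fight and flip a heading at the end of the step no matter who stalls, so the configuration you are trying to hold fixed keeps mutating. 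Second, and more seriously, non-erratic track switching under conditions (1)--(3) is not a probabilistic event you can exclude at cost $r$ or $p$: the model explicitly leaves \emph{when} such switches happen to an arbitrary internal scheduler, so your opening move --- modelling the whole system as a time-homogeneous finite Markov chain --- is not faithful to the model, and your favourable path must survive every scheduler, not just a memoryless one. Device (i) is also not innocuous: a hole steps clockwise only if the locust on its clockwise side is counterclockwise-heading (and symmetrically), so you cannot steer it freely through mixed tracks whose headings you do not control, and you cannot change headings except through conflicts. You correctly flag this lemma as the main obstacle, but it is precisely the hard content of the theorem and is left unproven.

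The paper escapes all three difficulties with one idea your proposal is missing: it first conditions on the positive-probability event that \emph{no} erratic behaviour occurs for $\mathcal{O}(\log(k)n^2)$ steps and invokes Theorem~\ref{klargebounds}, whose proof is scheduler-independent. Either a conflict occurs along the way --- which already suffices, since \emph{any} conflict increases the clockwise count with probability $1/2$; there is no need to bring a specific pair $C,D$ together, which is a much stronger requirement than you need --- or all tracks become internally uniform. In a locally stable configuration rules (1)--(3) are never enabled, so the adversarial scheduler is neutralized; holes then drift predictably inside uniform tracks, can be passed between tracks by single erratic jumps (at most $k$ of them, costing $\mathcal{O}(nk)$ steps), and can be aligned using stalls so that one final erratic jump lands a locust adjacent to an opposite-heading neighbour, forcing a conflict. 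If you want to salvage your route, the fix is to restructure it the same way: use the local-consensus theorem to reach a frozen, uniform-per-track state before attempting any hole choreography, and aim only to create \emph{some} conflict rather than a designated one.
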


\begin{proof}
Our goal is to show that all locusts must have identical heading in finite expected time. We will find a crude upper bound for this time. It suffices to show that as long as there are two locusts with different headings in the system (perhaps not on the same track), there is a bounded-above-0 probability $q$ that within a some constant, finite number of time steps $C$ (we will show $C= \mathcal{O}(\log (k)n^2+nk)$), the number of locusts with clockwise heading will increase. This amounts to showing that there is a sequence of events, each individual event happening with non-zero probability, that culminates in a conflict between two locusts occurring (since any conflict has probability $0.5$ of increasing the number of clockwise locusts). Since $q > 0$, the only stable state of locust headings is the state where all locusts have identical heading, as otherwise there is always some probability that all locusts will have clockwise heading after $m \cdot C$ time steps; this completes the proof.

Let us show such a sequence of events. First let us consider the case where there is a track in which two locusts have non-identical headings. In this case, assuming no locusts behave erratically for $\mathcal{O}(\log (k)n^2)$ steps (which occurs with a tiny but bounded-above-0 probability since $p,r > 0$), Theorem \ref{klargebounds} tells us that in expected $\mathcal{O}(\log (k)n^2)$ steps, locusts on the same track will have identical heading. Hence, there is a sequence of events that happens with non-zero probability which leads to local consensus in the tracks. 

If any conflict occurs during this sequence, we are done. Otherwise, we need to show a sequence of events that leads to a conflict, assuming all tracks are stable. The only thing that causes locusts in local consensus to move tracks is erratic behaviour. If two adjacent tracks have locusts with non-identical heading, and there is at least one empty space in one of them, then (since $r > 0$) with some probability within at most $n$ time steps an empty space in one track will be vertically adjacent to a locust in the other track. At this point, with probability $p$, that locust will move from one track to the other. This creates a situation where in one track there are locusts of different headings again. If the erratic locust moves tracks at the right time, upon moving it will be adjacent to another locust in its new track, whose heading is different. Hence, the erratic locust will enter a conflict in the next time step, which will increase the number of clockwise locusts with probability $0.5$.

Now let us consider a pair of two adjacent tracks with locusts of different headings such that there no empty space in one of them. We note that since there is at least one empty location in \textit{some} track, erratic behaviour can cause that empty location to move vertically in an arbitrary fashion until, after at most $k$ movements, it enters a track from the pair. With non-zero probability, this can take at most $nk$ time steps, after which we are reduced to the situation in the previous paragraph.

A pair of adjacent tracks that have locusts with different headings must exist unless there is global consensus. Hence, in every $\mathcal{O}(\log (k)n^2+nk)$ time steps where there is no global consensus, there is a some probability $q > 0$ that the number of clockwise-heading locusts will increase. \qed\end{proof}

\section{Simulation and empirical evaluation}

\begin{figure}[!ht]
\centerline{\includegraphics[scale=.6]{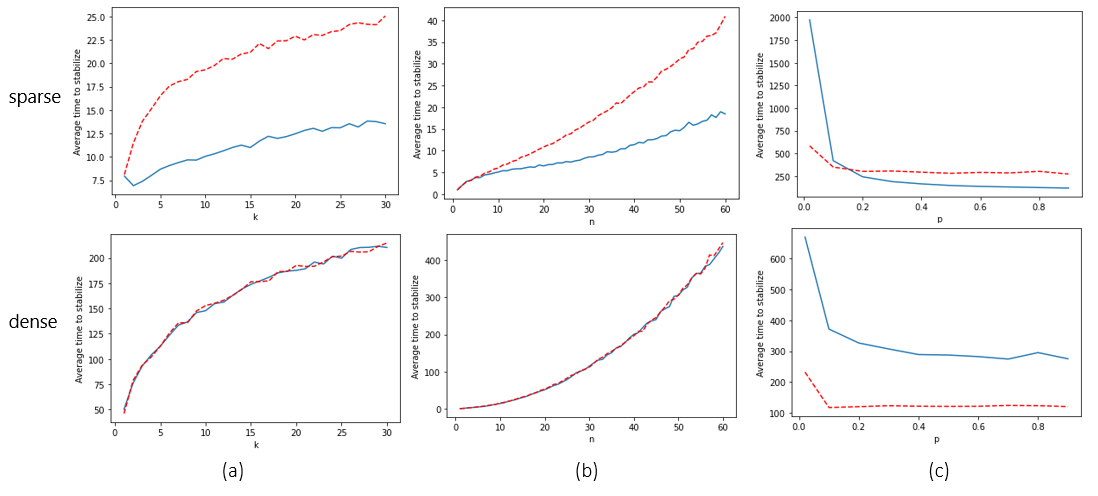}}
\caption{Simulations of the locust model. The $y$ axis is $T_{stable}$. Column (a) measures $T_{stable}$ for $k=1...30$, with $n$ fixed at 30. Column (b) measures $T_{stable}$ for $n=1...60$, with $k$ fixed at 5. Column (c) measures $T_{stable}$ with $n=30, k=5$, and $p$ (the probability of erratic behaviour) going from $0$ to $1$. The top row measures $T_{stable}$ for a sparse locust configuration, while the bottom row does so for a dense configuration. The dashed line estimates $T_{stable}$ when locusts never switch tracks (except while behaving erratically in column c); the blue line estimates $T_{stable}$ when locusts switch tracks as often as the model rules allow.}
\label{graphdata}
\vspace{-6mm}
\end{figure}

Let us explore some questions about the expected value of $T_{stable}$ through numerical simulations. Certain aspects of the locusts' dynamics were not studied in our formal analysis: the most interesting of which is the helpful effects of track switching on $T_{stable}$. Recall that our model allows locusts to switch tracks if this would enable them to avoid a conflict and join a track where \textit{locally}, locusts are marching in their same direction. At least in principle, this seems like it should help our locusts achieve local stability faster, hence decrease $T_{stable}$. However, recall also that we do not specify \textit{when} locusts switch tracks, which means that some locusts might never switch tracks, or they might choose to do so in the worst possible moments. Hence, the positive effect track-switching usually has on $T_{stable}$ cannot be reflected in the bounds we found for $\mathbb{E}[T_{stable}]$, since these bounds must reflect all possible locust behaviours. Under ordinary circumstances, however, it seems as though frequent track switching should noticeably decrease the time to local stabilization. As we shall see numerically, this is indeed the case. This justifies the track-switching behaviour as a mechanism that, despite being highly local, enables the locusts to come to local consensus about the direction of motion sooner.

In Figure \ref{graphdata}, (a) and (b), we measure $T_{stable}$ as it varies with $n$ and $k$, assuming the probabilities of erratic behaviour are $0$ (i.e., $r=p=0$). We simulate two different locust configurations: a ``dense'' configuration, and a ``sparse'' configuration. In the dense configuration, $50\%$ of locations are initiated with a locust, with the locations chosen at random. In the sparse configuration, $10\%$ of locations are initiated with a locust (or slightly more, to guarantee all tracks start with $2$ locusts). The locusts are initiated with random heading. We measure the effect of track switching on $T_{stable}$: the opaque lines measure $T_{stable}$ when locusts switch tracks as often as they can (while still obeying the rules of the model), and the dotted lines measure $T_{stable}$ when locusts never switch tracks. For every value of $n$, $k$, we ran the simulation 1000 to 3000 times and averaged $T_{stable}$ over all simulations.

As we can see, in the sparse configuration, track-switching has a significantly positive effect on time  to stabilization. For example, with $k=30$, $n=30$, $T_{stable}$ is approximately $13.5$ when locusts switch tracks as soon as they can, and approximately $25$ when they never switch tracks--nearly double. In the dense configuration, we see that enabling locusts to move tracks has little to no effect, since the locust model rarely allows them to do so due to the tracks being overcrowded.

In column (c) of Figure \ref{graphdata}, we measure how a non-zero probability $p$ of erratic behaviour affects $T_{stable}$. We set $r = 0$. As we proved in the previous section, whenever $p > 0$, stabilization requires \textit{global} rather than local consensus. Hence, we cannot directly compare the $T_{stable}$ of these graphs with columns (a) and (b), where $T_{stable}$ measures the time to local consensus. We see that $\mathbb{E}[T_{stable}]$ approaches $\infty$ as $p$ goes to $0$, as one would expect, since when $p = 0$, global stability can never occur in some initial configurations. $\mathbb{E}[T_{stable}]$ decreases sharply as $p$ goes to some critical point around $0.1$, and decreases at a slower rate afterwards. It is interesting to note that low probability of erratic behaviour affects  $\mathbb{E}[T_{stable}]$ significantly more in the \textit{sparse} configuration, where for $p = 0.02$, if locusts also switch tracks whenever the model allows them, $\mathbb{E}[T_{stable}]$ was measured as being approximately $1974$, as opposed to $669$ in the dense configuration. One of the core reasons for this seems to be that, in the sparse configuration, when a locust erratically moves to a track with a lot of locusts not sharing its heading, it will often be able to \textit{non-erratically} move back to its former track, thus preventing locust interactions between tracks of different headings. When we disabled the locusts' ability to switch tracks non-erratically, $T_{stable}$ was significantly smaller in the sparse configuration ($\mathbb{E}[T_{stable}] \approx 232$ for $p = 0.02$). 

Based on the above, we make the curious observation that, while non-erratic track switching accelerates local consensus, for some track-switching behaviours, it will in fact decelerate the attainment of global consensus. This is seen by the fact that frequent non-erratic  track-switching was helpful in Columns (a) and (b) of Figure  \ref{graphdata}, but increased time to stabilization in Column (c). This is perhaps a very natural observation, because agents that aggressively switch tracks will attempt to avoid conflict as often as possible, whereas conflict is necessary to create global consensus. 

\section{Concluding remarks}

We studied collective motion in a model of discrete locust-inspired swarms, and bounded the expected time to stabilization in terms of the number of agents $m$, the number of tracks $k$, and the length of the tracks $n$. We showed that when the swarm stabilizes, there must be a local consensus about the direction of motion. We also showed that, when the model is extended to allow a small probability of erratic behaviour to perturb the system, global consensus eventually occurs. 

A direct continuation of our work would be to find upper bounds on time to stabilization when there is some probability of erratic behaviour. Furthermore, our empirical simulations suggest several curious phenomena related to erratic behaviour: first, there seems to be a clash between ``erratic'' and non-erratic, ``rational'' track-switching, as when locusts switch tracks non-erratically in order to avoid collisions, this seems to accelerate the attainment of local consensus, but mostly hinder the attainment of global consensus. Second, increasing the probability of erratic track-switching $p$  behaviour was helpful in accelerating global consensus up to a point, but in simulations, its impact seemed to fall off past a small critical value of $p$. In future work, it would be interesting to investigate these aspects of the model.

Although our dynamics model is inspired by experiments on locusts, it can be understood in more abstract terms as a model that describes a situation where many agents that wish to maintain a direction of motion are confined to a small space where they exert pressure on each other. It is natural to ask what kinds of collective dynamics, if any, we should expect when this small space has a different topology; rather than a ringlike arena, we might consider, e.g., a square arena. We believe that rich models of swarm dynamics can be discovered through observing natural organisms exert pressure on each other in such environments. In the introduction, we mentioned points of similarity between our model and models of opinion dynamics. We suspect that these points of similarity will remain in settings with non-ringlike arenas, and might provide a starting point for formally modelling and analysing them. 

\section{Acknowledgements}

This research was partially supported by the Israeli Science Foundation grant no. 2306/18. The authors would like to thank Prof. Amir Ayali (Tel Aviv University) for bringing our attention to the locust experiments and for graciously letting us use the image in Figure \ref{reallocustsfigure},  Prof. Ofer Zeitouni (Weizmann Institute of Science) for helpful discussions, and the anonymous reviewers for constructive comments.

\bibliographystyle{plain}
\bibliography{references}

\end{document}